\DeclareMathOperator*{\spann}{span}
\theoremstyle{plain}
\newtheorem{theorem}{Theorem}[section]
\newtheorem{corollary}[theorem]{Corollary}
\newtheorem{proposition}[theorem]{Proposition}
\theoremstyle{definition}
\theoremstyle{remark}
\newtheorem{remark}{Remark}
\begin{document}

\articletype{ }

\title{Near Optimal Interpolation based Time-Limited Model Order Reduction }

\author{
\name{Kasturi Das\textsuperscript{a}\thanks{CONTACT: Kasturi Das. Email: kasturidas@iitg.ac.in}, Srinivasan Krishnaswamy\textsuperscript{a} and Somanath Majhi\textsuperscript{a}}
\affil{\textsuperscript{a}Indian Institute of Technology Guwahati, India}
}

\maketitle

\begin{abstract}
This paper presents an interpolatory framework for time-limited $H_2$ optimal model order reduction named Limited Time Iterative Rational Krylov Algorithm (LT-IRKA). The algorithm yields high fidelity reduced order models over limited time intervals of the form, $\begin{bmatrix}0 & \tau \end{bmatrix}$ with $\tau < \infty$ for linear time invariant (LTI) systems. Using the time limited $H_2$ norm, we derive interpolation based $H_{2,\tau}$ optimality conditions. The LT-IRKA yields a near optimal $H_2(\tau)$ reduced order system. The nearness to the exact $H_2(\tau)$ optimal reduced system is quantized in terms of the errors in the interpolation based $H_2(\tau)$ optimality conditions. We demonstrate with numerical examples how the proposed algorithm nearly satisfies the time-limited optimality conditions and also how it performs with respect to the Time-Limited Two sided Iteration Algorithm (TL-TSIA), the Time-Limited Balanced Truncation (TL-BT), the Iterative Rational Krylov Algorithm (IRKA) and the Time-Limited Pseudo Optimal Rational Krylov (TL-PORK) Algorithm over a finite time interval.
\end{abstract}



\begin{keywords}
model order reduction; linear time invariant systems; rational Krylov methods; ;time limited $H_2$ optimal 
\end{keywords}

\section{Introduction}

Linear dynamical models are used to model the behaviour of physical systems. Large scale models capture the dynamics of the system to a high degree of accuracy. Complex systems require large models to capture the system dynamics accurately. However, we need a lot of computational resources to simulate or analyse such large models. Designing controllers for such large scale systems also becomes a difficult task. Using model order reduction techniques, one can resolve such issues where large models are replaced by smaller ones based on various performance measures.

A wide range of model reduction algorithms is available in the literature. Singular Value Decomposition (SVD) based model reduction methods include Balanced Truncation (BT) \citep{moore1981principal, mullis1976synthesis}, Optimal Hankel Norm Approximation (OHNA) \citep{glover1984all}, Balanced Singular Perturbation Approximation (BSPA) \citep{liu1989singular} etc. Preservation of stability and existence of an apriori error bound are some advantages of BT. The algorithm involves solving two large scale Lyapunov equations and is computationally expensive. Several modifications are incorporated in BT using Rational Krylov Subspace methods \citep{gugercin2003modified,sorensen2002sylvester,penzl2006algorithms,li2002low,gugercin2004survey} to reduce the computational cost. In OHNA, the Hankel norm error is minimized to obtain reduced-order models. BSPA introduces certain modifications in BT, resulting in reduced-order models that approximate the original system's behaviour very well at very low frequencies. 

In moment matching based model reduction techniques, the transfer function of the reduced model interpolates the original transfer function and a finite number of its moments at selected interpolation points. Algorithms based on moment matching are discussed in \citet{astolfi2010model}, \citet{feng2015fully}. They also include Krylov based reduction strategies such as the Lancos algorithm \citep{lanczos1950iteration, feldmann1997interconnect}, the Arnoldi algorithm \citep{arnoldi1951principle} and Rational Krylov based algorithms \citep{villemagne1987model,gallivan1996rational,grimme1997krylov}. The Krylov algorithms require minimal computational effort and hence work very well for large scale systems. 

\citet{gugercin2008h_2} prove that $H_2$ error norm is greater than output error norm at every time instant for all inputs having bounded energy. This result inspires a popular category of reduction methods based on optimizing the $H_2$ error norm. The $H_2$ optimal model reduction problem is a non-linear non-convex optimization problem. Finding global minimizers for such optimization problems is a difficult task. Hence, the existing methods focus on finding local minimizers. Such methods are of two categories: optimization-based methods and tangential interpolation methods.

Among the optimization methods, the earliest ones obtain the gradients of the $H_2$ error norm concerning the reduced state matrices and applied gradient flow techniques to get the local approximation of the model \citep{wilson1970optimum}. Later techniques convert the problem to an unconstrained optimization problem over the Stiefel manifold. Then the cost function is minimized over the manifold using optimization schemes to yield reduced-order models \citep{yan1999approximate}. The Riemannian trust-region method is used to solve the $H_2$ optimization problem over a Stiefel manifold in \citet{sato2015riemannian}. The tangential interpolation methods are based on moment matching and use efficient Krylov based algorithms, which include Iterative Rational Krylov Algorithm (IRKA) \citep{gugercin2008h_2} and Two-Sided Tangential Interpolation (TSIA) \citep{xu2011optimal}. Both methods are iterative and obtain reduced models satisfying first order $H_2$ optimality conditions. The methods are computationally efficient and applicable to large scale systems. However, there is no guarantee of the preservation of stability and convergence of the iterations. Modified versions of IRKA are proposed in \citet{beattie2007krylov}, \citet{beattie2009trust} to overcome such drawbacks. An iteration free pseudo optimal Rational Krylov (PORK) algorithm is proposed in \citet{wolf2014h}. It satisfies a subset of the first order $H_2$ optimality conditions. The Cumulative Reduction Scheme (CURE) proposed in \citet{panzer2014model} is used to construct reduced order models adaptively. 

The model reduction methods discussed till now yield good reduced-order models for infinite time-horizon. In specific settings, one may have access to simulation data over a finite time horizon, or one might be interested in approximating the output trajectory of the original system only over a limited time interval. In such cases, the model order reduction problem is restricted over a finite time interval. Accuracy outside the time interval is not essential.

Finite-time model reduction schemes include Proper Orthogonal Decomposition (POD) \citep{holmes2012turbulence} and time-limited balanced truncation (TL-BT) \citep{gawronski1990model}. Several variants of time-limited balanced truncation are proposed in \citet{gugercin2003time}, \citet{kumar2017generalized}, \citet{shaker2013generalized}, \citet{tahavori2013model}, \citet{jazlan2015cross}. A modified TL-BT algorithm is proposed in \citet{gugercin2003time} which preserves stability and an apriori error bound exits. But it has low efficiency and is computationally heavy compared to TL-BT. An output error bound for TL-BT is proposed in \citet{redmann2018output}. An $L_{\tau}^2$ error bound for TL-BT is suggested in \citet{redmann2020lt2} which converges to the well known $H_{\infty}$ error bound for BT as $\tau \to \infty$. Time-limited balanced truncation is studied in \citet{kurschner2018balanced} for large-scale continuous LTI systems and in \citet{duff2021numerical} for large-scale discrete LTI systems.

In \citet{goyal2019time}, Lyapunov based $H_2(\tau)$ optimality conditions are derived and an iterative scheme to obtain $H_2(\tau)$ optimal reduced-order models is proposed. Using a diagnostic measure involving the errors in the Lyapunov based $H_2(\tau)$ optimality conditions, the nearness to optimality is quantified. The algorithm proposed can be seen as a time-limited version of TSIA \citep{xu2011optimal}. In our paper, we shall refer to their algorithm as TL-TSIA. Interpolation based first-order necessary conditions for time-limited $H_2$-optimal model order reduction are derived in \citet{sinani2019h2}. The authors propose a descent-based iterative algorithm (FHIRKA) for satisfying the optimality conditions, valid only for SISO systems.  \citet{zulfiqar2020time} proposes an iteration free time-limited pseudo-optimal rational Krylov algorithm (TL-PORK) which satisfies a subset of the $H_2(\tau)$ optimality conditions and an adaptive version of TL-PORK known as TL-CURE which yields high-fidelity reduced-order models within a specific time interval. 

Interpolatory techniques for model order reduction yield high fidelity reduced-order models. Further, they are numerically efficient. Though interpolation-based $H_2(\tau)$-optimality conditions have been derived in \cite{sinani2019h2}, an interpolatory approach analogous to the IRKA for $H_2(\tau)$-optimal model order reduction is not available. We first derive the interpolation-based $H_2(\tau)$ optimality conditions along the lines of the work in \cite{breiten2015near}. 
Then we propose an interpolation based $H_2(\tau)$ optimal model reduction algorithm called Limited Time Iterative Rational Krylov Algorithm (LT-IRKA). The algorithm is iterative and uses a pair of modified finite time rational Krylov subspaces to construct the reduced-order model at every iteration. It minimizes the $H_2(\tau)$ error and yields a near-optimal reduced-order system. The tangential and bi-tangential errors involved in the finite time rational Krylov based model reduction are quantified. They are used to measure the closeness of the reduced system to the $H_2(\tau)$ optimal reduced system. The quantification can be used to explain the scenarios where LT-IRKA will yield good reduced models. We show that LT-IRKA and TL-TSIA produce similar left and right projection spaces and hence produce similar near $H_2(\tau)$ optimal reduced-order models. This is verified from numerical experiments where TL-TSIA and LT-IRKA result in reduced models with similar $H_2(\tau)$ errors. We also compare the performance of LT-IRKA with TL-BT, IRKA and TL-PORK using standard examples for various time intervals. 

We discuss some mathematical preliminaries in Section \ref{Prelims}.
In Section \ref{Section 2}, we derive time-limited $H_2(\tau)$ optimality conditions using a strategy similar to \cite{breiten2015near}. In Section \ref{Section 3} we discuss a variation of the rational Krylov based model reduction, modified to result in the tangential interpolation of the time-limited transfer function. The expressions for interpolation errors are derived, and conditions needed for good interpolation are analysed. Based on the $H_2(\tau)$ optimality conditions and the projection-based time-limited Krylov based model reduction, an iterative model reduction strategy along the lines of IRKA is proposed in Section \ref{Section 4}, and its relation to TL-TSIA is considered. Section \ref{Section 5} demonstrates the working of the proposed algorithm using three benchmark LTI systems and comparisons with existing strategies in the literature. We conclude the paper in Section \ref{Section 6} with a summary.  

\section{Preliminaries}\label{Prelims}

Consider a continuous linear time-invariant(LTI) system, $\Sigma$ with state-space representation :
\begin{subequations}\label{eq1}
\begin{equation} 
\dot{x}(t) = Ax(t)+Bu(t), \, x(0)= 0,\\
\end{equation}
\begin{equation} 
y(t) = Cx(t), \, t \geq 0   
\end{equation}
\end{subequations}
 where $A \in \mathbb{R}^{n \times n}$, $B \in \mathbb{R}^{n \times m}$ and $C \in \mathbb{R}^{p \times n}$.  The state dimension $n$ is a relatively large. Let $g(t) \in \mathbb{R}^{(p \times m)}$ be the impulse response and $G(s)$ be the transfer function of $\Sigma$. Consider $A$ to be Hurwitz, implying that the system (\ref{eq1}) is asymptotically stable.

The focus in MOR is to a obtain reduced order system, $\hat{\Sigma}$ as follows:
\begin{subequations}\label{eq2} 
\begin{equation} 
\dot{\hat{x}}(t) = \hat{A}\hat{x}(t)+\hat{B}\hat{u}(t), \hat{x}(0)= 0, \\
\end{equation}
\begin{equation} 
\hat{y}(t) = \hat{C}\hat{x}(t), \, t \geq 0   
\end{equation}
\end{subequations}
where $\hat{A} \in \mathbb{R}^{r \times r}$, $\hat{B} \in \mathbb{R}^{r \times m}$  and $\hat{C} \in \mathbb{R}^{p \times r}$ with $r \ll n$. Let $\hat{G}(s)$ be transfer function and $\hat{g}(t)$ be the impulse response of the reduced system.

For the model order reduction problem over the unrestricted time interval $[0,\infty]$, it is desirable that $y_r(t) \approx y(t)$ for all permissible inputs $u(t)$. This is achieved by ensuring that $(g-\hat{g})$ is small with respect to the $H_2$ system norm defined as,

\begin{equation}
\left\Vert g-\hat{g} \right\Vert_{H_2} = \left(\int_{0}^{\infty}\left\Vert g(t)-\hat{g}(t) \right\Vert_{F}^2 dt \right)^{\frac{1}{2}} 
\end{equation}
where the Frobenius norm $\left\Vert f \right \Vert_{F}$ for $f(t)$ over the time-interval $[0,\infty]$ is defined as 

\begin{equation}
\left\Vert f \right \Vert_{F}^{2} = \int_{0}^{\infty}\text{Tr}\left(f^{*}(t)f(t)\right) dt    
\end{equation}
The optimal $H_2$ model reduction problem involves finding a reduced $r$th order dynamical system $\hat{\Sigma}$ which is the best system of order $r$ approximating $\Sigma$ with respect to the $H_2$ norm and is defined as follows,
\begin{equation}\label{H2 optimal model reduction}
    \left\Vert \Sigma -\hat{\Sigma} \right\Vert_{H_2} = \operatorname*{min}_{\text{dim}(\Sigma_r)=r} \left\Vert \Sigma-\Sigma_r \right\Vert_{H_2}
\end{equation}
The $H_2$-optimization problem over the unrestricted time interval $[0,\infty]$ is non-convex and finding global minimizers is a difficult task. The usual approach is to propose algorithms that yield reduced order models satisfying local(first-order) necessary conditions for $H_2$ optimality which are mainly of two types - Lyapunov based \citep{wilson1970optimum} and Interpolation based \citep{meier1967approximation}. The reduced order models obtained are efficient in spite of the use of local minimizers. We state the Interpolation based $H_2$ optimality conditions and dicuss about IRKA. Refer to Theorem 3.4 in \citet{gugercin2008h_2} and Theorem 1 in \citet{antoulas2010interpolatory} for a detailed discussion.

\begin{theorem}
Consider the full-order system (\ref{eq1}) with impulse response $g(t) \in 
\mathbb{R}^{p \times m}, t \in [0,\infty)$. Let $\hat{g}(t) \in \mathbb{R}^{p \times m}, t \in [0,\infty]$ be the impulse response of the reduced order system represented by (\ref{eq2}) which solves the $H_2$ optimization problem (\ref{H2 optimal model reduction}). Assume that the reduced system is of order $r$ and is diagonalizable.
Let $\{\lambda_1,\cdots,\lambda_r\}$ be the $r$ simple poles of the transfer function of $\hat{G}(s)$. The impulse response matrix $\hat{g}(t)$ can be represented in the pole-residue form as,
\begin{align}\label{rt}
\hat{g}(t) &= \hat{C}e^{\hat{A}t}\hat{B} \\
     &= \sum_{k=1}^{r} \hat{c}_k{\hat{b}_k}^{*} e^{\lambda_k t}
\end{align}
with $\hat{A} \in \mathbb{R}^{r \times r}$, $\hat{B} \in \mathbb{R}^{r \times m} $, $\hat{C} \in \mathbb{R}^{p \times r}$, $\hat{c}_k \in \mathbb{C}^p $ and $\hat{b}_k \in \mathbb{C}^m$ for $k=1,\cdots,r$. Then for $k=1,2,\cdots,r$ we have
\begin{equation}\label{H2 Right Tangential}
G_{\tau}(-\lambda_k)\hat{b}_k  =  \hat{G}_{\tau}(-\lambda_k)\hat{b}_k
\end{equation}
\begin{equation}\label{H2 Left Tangential}
{\hat{c}_k}^{*}G_{\tau}(-\lambda_k)  =  {\hat{c}_k}^{*} \hat{G}_{\tau}(-\lambda_k)\\
\end{equation}
\begin{equation}\label{H2 Bi-Tangential}
{\hat{c}_k}^{*} G_{\tau}^{'}(-\lambda_k)\hat{b}_k  = { \hat{c}_k}^{*}  \hat{G}_{\tau}^{'}(-\lambda_k) \hat{b}_k
\end{equation}
where $G'(s)=\frac{d}{ds}G(s)$
\end{theorem}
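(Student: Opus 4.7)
The plan is to parametrize the cost functional $\|G - \hat{G}\|_{H_2}^{2}$ by the pole-residue coordinates $\{\lambda_k, \hat{b}_k, \hat{c}_k\}_{k=1}^{r}$ and read off the stationarity equations with respect to each coordinate to arrive at the three tangential interpolation identities. Since $\hat{A}$ is assumed diagonalizable with simple poles, the representation $\hat{G}(s) = \sum_{k=1}^{r} \hat{c}_k\hat{b}_k^{*}/(s-\lambda_k)$ furnishes a smooth global parametrization that makes this variational approach tractable; this is precisely where the diagonalizability hypothesis is used.

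I would begin with the decomposition
\begin{equation*}
\|G - \hat{G}\|_{H_2}^{2} = \|G\|_{H_2}^{2} - 2\,\mathrm{Re}\,\langle G,\hat{G}\rangle_{H_2} + \|\hat{G}\|_{H_2}^{2},
\end{equation*}
in which only the last two summands depend on $\hat{G}$. Applying residue calculus to the Plancherel integral representation of the $H_2$ inner product, together with the pole-residue formula, yields
\begin{equation*}
\langle G,\hat{G}\rangle_{H_2} = \sum_{k=1}^{r} \hat{b}_k^{*}\,G^{*}(-\lambda_k)\,\hat{c}_k,
\end{equation*}
and an analogous double-sum expression for $\|\hat{G}\|_{H_2}^{2}$ in terms of the residues and of sums of mirrored poles.

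Next, I would impose first-order stationarity. Differentiating the cost with respect to $\hat{b}_k$ produces an equation which, after identifying the interior sum coming from $\|\hat{G}\|_{H_2}^{2}$ with $\hat{G}(-\lambda_k)$, collapses to the right tangential condition (\ref{H2 Right Tangential}); differentiating with respect to $\hat{c}_k$ gives, by the same mechanism, the left tangential condition (\ref{H2 Left Tangential}). The bi-tangential condition (\ref{H2 Bi-Tangential}) emerges from $\partial/\partial \lambda_k$: the chain rule supplies derivatives of $G$ and of $\hat{G}$ at the mirror image $-\lambda_k$, and the corresponding terms from $\|\hat{G}\|_{H_2}^{2}$ assemble into a matching evaluation of $\hat{G}'$.

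The main technical obstacle is the bookkeeping. The MIMO residues $\hat{c}_k\hat{b}_k^{*}$ are rank-one matrices, and the Frobenius inner product introduces complex conjugates that must be tracked carefully when extracting the three vector equations from a single gradient condition. The derivative with respect to $\lambda_k$ is particularly delicate because the pole appears in both the cross-term and the self-term of the cost, and one must verify that the real and imaginary directions of differentiation vanish simultaneously (or equivalently invoke the Wirtinger calculus) so that the stationarity condition takes the compact form (\ref{H2 Bi-Tangential}) rather than splitting into two separate real identities.
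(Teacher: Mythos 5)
Your proposal is correct in outline but takes a genuinely different route from the paper. In fact the paper offers no proof of this preliminary statement at all (it defers to Theorem 3.4 of Gugercin, Antoulas and Beattie, 2008), but it does prove the essentially identical time-limited version (Theorem \ref{Theorem 1}) by a perturbation argument in the Hilbert space $H_2(\tau)$: one supposes a residual inner product such as $\langle g-\hat g,\, a\hat b_k^{*}e^{\lambda_k t}\rangle$ equals some $\alpha\neq 0$, perturbs a single residue (or, for the bi-tangential condition, a single pole) by $\epsilon$ in the phase direction that makes the cross term equal $-2\epsilon|\alpha|$, concludes from $0\le -2\epsilon|\alpha|+\mathcal{O}(\epsilon^{2})$ that $\alpha=0$, and then converts the vanishing inner product into an interpolation condition via Proposition \ref{Proposition 1}. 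Your route instead expands the cost $\|G\|_{H_2}^{2}-2\,\mathrm{Re}\,\langle G,\hat G\rangle_{H_2}+\|\hat G\|_{H_2}^{2}$ explicitly in pole--residue coordinates and sets the Wirtinger gradients with respect to $\hat b_k$, $\hat c_k$, $\lambda_k$ to zero --- the classical derivation. What each buys: yours yields explicit gradient formulas (useful for descent methods) but leans on the closed-form identity $\langle G,\hat G\rangle_{H_2}=\sum_k \hat c_k^{*}G(-\lambda_k)\hat b_k$, which is specific to the unweighted infinite-horizon norm; the paper's perturbation argument never needs a closed form for the cost and therefore transfers verbatim to the time-limited (and frequency-weighted) settings, which is the whole point of the paper.

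Two caveats on your version. First, the ``interior sum'' you want to identify with $\hat G(-\lambda_k)$ is really $\sum_l \hat c_l\hat b_l^{*}/(-\lambda_k-\bar\lambda_l)$: the Hermitian inner product conjugates the poles and residues of the second factor, so the sum is naturally an evaluation at $-\bar\lambda_k$, and recovering $\hat G(-\lambda_k)$ requires invoking the conjugate-pair symmetry of a real realization. That is precisely the step you defer to ``bookkeeping,'' and it is where this proof most often goes wrong; it also requires $\mathrm{Re}(\lambda_k)<0$ for all $k$ so that every integral converges, an assumption the paper's $H_2(\tau)$ argument does not need. Second, as printed the theorem's conclusions involve $G_{\tau}$ and $\hat G_{\tau}$ even though the hypothesis is the infinite-horizon problem; your argument establishes the conditions for $G$ and $\hat G$, which is what the statement should read (i.e., $\tau=\infty$), but it would not produce the $\tau$-restricted conditions for finite $\tau$ --- for those one needs the truncated inner-product identities of Proposition \ref{Proposition 1} and the argument of Theorem \ref{Theorem 1}.
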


\begin{theorem}

Consider a set of $r$ interpolation points $\{\sigma_1,\sigma_2,\hdots,\sigma_r\}$ and let $\{b_1,b_2,\hdots,b_r\}$ and $\{c_1,c_2,\hdots,c_r\}$ be the right and left tangential directions respectively. Define:
\begin{equation} \label{RtProjector}
    \text{Ran }(V) = \operatorname*{span}_{i=1,2,\hdots,r} \{ (\sigma_i I_n-A)^{-1}B b_i \}
\end{equation}
\begin{equation}\label{LtProjector}
    \text{Ran }(W) = \operatorname*{span}_{i=1,2,\hdots,r} \{ (\sigma_i I_n-A^{*})^{-1}C^{*}c_i \}
\end{equation}
Let $Z^{*} = (W^{*}V)^{-1}W^{*}$. The reduced system matrices are obtained as follows,
\begin{equation}\label{Reduced_Matrices}
\hat{A}=Z^{*}AV,\quad \hat{B}=Z^{*}B \quad \textrm{and}\quad \hat{C}=CV
\end{equation}
Assume that the interpolation points are not eigenvalues of $A$ or $\hat{A}$. Then the following holds,
\begin{equation}\label{H2 Right Interpolation}
    G(\sigma_i)b_i = \hat{G}(\sigma_i)b_i
\end{equation}
\begin{equation}\label{H2 Left Interpolation}
    {c_i}^{*}G(\sigma_i) = {c_i}^{*}\hat{G}(\sigma_i)
\end{equation}
\begin{equation}\label{H2 Bi-Interpolation}
    {c_i}^{*}G'(\sigma_i)b_i = {c_i}^{*}\hat{G}'(\sigma_i)b_i
\end{equation}
where $G'(s)=\frac{d}{ds}G(s)$.
\end{theorem}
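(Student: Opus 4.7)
The plan is to exploit the fact that, by the construction \eqref{RtProjector}--\eqref{LtProjector}, the vectors $v_i := (\sigma_i I_n - A)^{-1} B b_i$ and $w_i := (\sigma_i I_n - A^{*})^{-1} C^{*} c_i$ lie in $\text{Ran}(V)$ and $\text{Ran}(W)$ respectively. Since $Z^{*} V = (W^{*}V)^{-1} W^{*} V = I_r$, the matrix $\Pi = V Z^{*}$ is an oblique projector onto $\text{Ran}(V)$ along $\ker(W^{*})$, and the key algebraic identity
\begin{equation*}
\sigma_i I_r - \hat{A} = Z^{*}(\sigma_i I_n - A) V
\end{equation*}
holds. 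All three tangential conditions will reduce to careful bookkeeping with this projector.

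First I would establish \eqref{H2 Right Interpolation}. Because $v_i \in \text{Ran}(V)$, write $v_i = V x_i$, so that $(\sigma_i I_n - A) V x_i = B b_i$. Premultiplying by $Z^{*}$ and invoking $Z^{*}V = I_r$ yields $(\sigma_i I_r - \hat{A}) x_i = \hat{B} b_i$, whence $x_i = (\sigma_i I_r - \hat{A})^{-1} \hat{B} b_i$; therefore $\hat{G}(\sigma_i) b_i = \hat{C} x_i = C V x_i = C v_i = G(\sigma_i) b_i$. The left identity \eqref{H2 Left Interpolation} follows by a dual argument. Since $\text{Ran}(Z) = \text{Ran}(W)$ and $V^{*} Z = I_r$, we may write $w_i = Z \alpha_i$ with $\alpha_i = V^{*} w_i$, so $w_i^{*} = \alpha_i^{*} Z^{*}$. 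Post-multiplying $w_i^{*}(\sigma_i I_n - A) = c_i^{*} C$ by $V$ then produces $\alpha_i^{*}(\sigma_i I_r - \hat{A}) = c_i^{*}\hat{C}$, and consequently $c_i^{*}\hat{G}(\sigma_i) = \alpha_i^{*}\hat{B} = w_i^{*} B = c_i^{*} G(\sigma_i)$.

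For the Hermite (bi-tangential) condition \eqref{H2 Bi-Interpolation}, I would use the identity $G'(s) = -C(s I_n - A)^{-2} B$, which gives $c_i^{*} G'(\sigma_i) b_i = -w_i^{*} v_i$. On the reduced side, let $\hat{v}_i = (\sigma_i I_r - \hat{A})^{-1}\hat{B} b_i$ and $\hat{w}_i = (\sigma_i I_r - \hat{A}^{*})^{-1}\hat{C}^{*} c_i$; from the previous step, $\hat{v}_i = x_i$ and $\hat{w}_i = \alpha_i$. Since $V x_i = v_i$ and $\alpha_i^{*} = w_i^{*} V$, the compatibility
\begin{equation*}
\hat{w}_i^{*}\hat{v}_i = \alpha_i^{*} x_i = w_i^{*} V x_i = w_i^{*} v_i
\end{equation*}
follows immediately, and therefore $c_i^{*}\hat{G}'(\sigma_i) b_i = -\hat{w}_i^{*}\hat{v}_i = -w_i^{*} v_i = c_i^{*} G'(\sigma_i) b_i$.

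The main difficulty --- modest, since this is essentially Grimme's classical result --- is the bi-tangential step, which needs both projection conditions to apply at the same shift $\sigma_i$ and relies on the compatibility $\hat{w}_i^{*}\hat{v}_i = w_i^{*} v_i$ derived above. The remaining technical assumptions (invertibility of $W^{*} V$, and the fact that the shifts are not eigenvalues of $A$ or $\hat{A}$) are already in the hypotheses and guarantee that every inverse written above exists.
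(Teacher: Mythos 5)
Your argument is correct, and it is essentially the classical Krylov-bookkeeping proof of tangential/Hermite interpolation (Grimme; Gallivan--Vandendorpe--Van Dooren). Be aware, though, that the paper does not actually prove this statement: it is quoted as a preliminary, with the reader referred to Theorem 3.4 of \citet{gugercin2008h_2} and Theorem 1 of \citet{antoulas2010interpolatory}. The closest in-paper argument is the proof of Theorem \ref{Theorem 2}, the time-limited analogue, where the authors introduce the projector-valued functions $P(s)=V(sI_r-\hat{A})^{-1}Z^{*}(sI_n-A)$ and $Q(s)=(sI_n-A)V(sI_r-\hat{A})^{-1}Z^{*}$ and decompose $G-\hat{G}$ into terms involving $I_n-P(s)$ and $I_n-Q(s)$, which vanish at $s=\sigma_i$ in the directions $b_i$, $c_i$ because the relevant vectors lie in $\mathrm{Ran}\,P(\sigma_i)$ and $\ker(I_n-Q(\sigma_i))^{*}$. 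Your route --- writing $v_i=Vx_i$, pushing the shifted resolvent equation through $Z^{*}$ via $\sigma_i I_r-\hat{A}=Z^{*}(\sigma_i I_n-A)V$, and closing the Hermite condition with $\hat{w}_i^{*}\hat{v}_i=w_i^{*}Vx_i=w_i^{*}v_i$ --- is more elementary and avoids $P(s),Q(s)$ altogether; the projector formulation buys an explicit error decomposition that survives when exact interpolation fails, which is precisely what the paper needs later for the time-limited case.

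One caveat, inherited from the statement rather than introduced by you: with $W$ built from $(\sigma_i I_n-A^{*})^{-1}C^{*}c_i$ and $*$ denoting conjugate transpose, the identity $w_i^{*}(\sigma_i I_n-A)=c_i^{*}C$ that your left and Hermite steps rely on actually reads $w_i^{*}(\overline{\sigma_i}I_n-A)=c_i^{*}C$ for non-real $\sigma_i$. The conclusions as written therefore require real interpolation data, or the transpose convention, or conjugated shifts and directions in the definition of $W$. This is a standard notational looseness shared by the paper, not a gap in your reasoning; everything else (invertibility of $W^{*}V$, $Z^{*}V=I_r$, $V^{*}Z=I_r$, $\mathrm{Ran}(Z)=\mathrm{Ran}(W)$, and the use of the eigenvalue hypothesis to invert $\sigma_i I_r-\hat{A}$) is used correctly.
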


Observe that the first-order $H_2$ optimality conditions are tangential interpolation conditions at mirror images of poles of the reduced order system. Since the interpolation points and tangential directions are not known apriori, constructing $V$ and $W$ is not easy. IRKA solves this issue by iteratively correcting the interpolation points and tangential directions. The major steps involved in IRKA are:
\begin{enumerate}
    \item For a set of $r$ random initial conditions and tangential directions, obtain the right and left projectors, $V$ and $W$ as in (\ref{RtProjector}) and (\ref{LtProjector}). Construct the reduced system matrices using (\ref{Reduced_Matrices}). Assuming $\hat{A}$ to be a diagonalizable matrix, let $R\Lambda R^{-1}$ be the eigenvalue decomposition of $\hat{A}$. Construct $\tilde{B}={\hat{B}}^{*}R^{-*}$ and $\tilde{C}=\hat{C}R$.
    
    \item  For the subsequent iterations, the negative of the eigenvalues of $\hat{A}$ are chosen as the interpolation points and the columns of $\tilde{B}$ and $\tilde{C}$ are chosen as the right and left tangential directions respectively in the next iteration.
    
    \item  The iterations are repeated till the norm of the difference between the interpolation points of two consecutive iterations becomes less than a certain tolerance value. Upon convergence, the reduced system satisfies the first-order $H_2$-optimality conditions (\ref{H2 Right Tangential}), (\ref{H2 Left Tangential}) and (\ref{H2 Bi-Tangential}). 
\end{enumerate}

IRKA is a Krylov-based model reduction method, and its implementation requires matrix-vector multiplications and some sparse linear solvers. Therefore it is computationally effective.

 \section{Interpolation based $H_2(\tau)$ optimality conditions}\label{Section 2}

\paragraph*{$H_2(\tau)$ inner product and norm: }
Let $H_2(\tau)$ be the set of $p\times m$ matrix valued functions $\{ g(t),\, t>0\}$ which are the impulse response matrices of finite dimensional dynamical systems. The functions are square integrable as a function of $t \in [0,\tau]$ in the sense that
\begin{equation}
\int_{0}^{\tau}\left\Vert g(t) \right\Vert_{F}^{2}dt < \infty
\end{equation}
We note that $H_2(\tau)$ is a Hilbert space. If $f(t)$ and $g(t)$ are impulse response matrices of dynamical systems such that $f,g \in H_2(\tau)$ then their inner product is defined as,
\begin{equation}\label{h_2 inner product}
\langle f,g \rangle_{H_2(\tau)} = \int_{0}^{\tau}\text{tr}(f(t)g^{*}(t)) dt
\end{equation}
where $g^{*}$ is the conjugate complex of $g$. The associated norm on $H_2(\tau)$ is given as
\begin{equation}\label{h_2 norm}
\left\Vert g \right\Vert_{H_2(\tau)} = \sqrt{\langle g,g \rangle_{H_2(\tau)}}
\end{equation}
As $\tau \to \infty$, the set $H_2(\tau)$ corresponds to  $H_2$ which consists of impulse response matrices of stable dynamical systems. Observe that $H_2 \subset H_2(\tau)$. For $f,g \in H_2(\tau)$, using the Cauchy-Schwarz inequality we can show that,
\begin{equation}
|\langle f,g \rangle_{H_2(\tau)}| \leq \left\Vert f \right\Vert_{H_2(\tau)} \left\Vert g \right\Vert_{H_2(\tau)} 
\end{equation}
The impulse response matrix associated with the system (\ref{eq1}) is $g $. Note that the system (\ref{eq1}) need not be stable for its impulse response $g$ to be in $H_2(\tau)$. Define the time restricted impulse response $g_{\tau}$ as 
\begin{equation}\label{gtau}
g_{\tau}(t)= \begin{cases}
          g(t), & t \in [0,\tau] \\
          0, & t \in (\tau,\infty)
         \end{cases}
\end{equation}

\paragraph*{Limited time transfer function ($G_{\tau}(s)$):}
For system (\ref{eq1}), $g(t)= Ce^{At}B,\, t \geq 0 $ is the impulse response and $G(s)= C (sI-A)^{-1} B$ is the transfer function. The Laplace transform of the time restricted impulse response matrix $g_{\tau}(t)$ (\ref{gtau}) is denoted by $G_{\tau}(s)$. It is derived as follows,
\begin{equation}\label{Time-limited Transfer Function_1}
\begin{aligned}
G_{\tau}(s) &= \int_{0}^{\infty} g_{\tau}(t)e^{-st}dt   \\
           &= -\int_{\tau}^{\infty}Ce^{At}Be^{-st}dt + G(s)\\   
           &= -Ce^{-s\tau}\left(\int_{0}^{\infty}e^{At}e^{-st}dt\right)e^{A\tau}B + G(s)\\
           &= -e^{-s\tau}C(sI_n-A)^{-1}e^{A\tau}B + G(s) 
\end{aligned}
\end{equation}
It can also be expressed as,
\begin{equation}\label{Time-limited Transfer Function_2}
G_{\tau}(s) =  -e^{-s\tau}Ce^{A\tau}(sI_n-A)^{-1}B + G(s) 
\end{equation}

\begin{proposition}\label{Proposition 1}
 Let $g_1(t)$ be the impulse response matrix of an LTI system with real state space realization and $G_1(s)$ be the corresponding transfer function. Let $g_{1,\tau}(t)$ be the restriction of $g_1(t)$ to the time-interval $[0,\tau]$ and $G_{1,\tau}(s)$ be the corresponding Laplace transform. Assume $g_2(t)=cb^{*}e^{\mu t}$ and $g_3(t)=cb^{*}te^{\mu t}$ for $t>0$. Then,

\begin{equation} \label{Prop11}
(1)\quad \langle g_1,g_2 \rangle_{H_2(\tau)} = c^{*}\overline{G_{1,\tau}(-\mu)}b
\end{equation}

\begin{equation} \label{Prop12}
(2)\quad  \left\Vert g_2 \right \Vert_{H_2(\tau)} = \frac{\left\Vert b \right\Vert \left\Vert c \right\Vert}{\sqrt{2|\text{Re}(\mu)|}}\sqrt{|1-e^{2\tau \text{Re}(\mu)}|} 
\end{equation}

\begin{equation}\label{Proop13}
(3)\quad  \langle g_1,g_3 \rangle_{H_2(\tau)} = -c^{*}\overline{G_{1,\tau}^{'}(-\mu)}b
\end{equation}
\end{proposition}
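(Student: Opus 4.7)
The plan is to verify each of the three identities by substituting the explicit forms of $g_2$ and $g_3$ into the $H_2(\tau)$ inner product definition~(\ref{h_2 inner product}), then using the cyclic invariance of the trace to reduce the integrand to a matrix-valued function of $t$ sandwiched between $c^*$ on the left and $b$ on the right, and finally matching the remaining scalar time integral with either $\overline{G_{1,\tau}(-\mu)}$ or its derivative via the Laplace transform definition.

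For part~(\ref{Prop11}), I would write $g_2^*(t) = b c^* e^{\bar\mu t}$, so that $\operatorname{tr}(g_1(t) g_2^*(t)) = \operatorname{tr}(g_1(t) b c^* e^{\bar\mu t}) = c^* g_1(t) b\, e^{\bar\mu t}$ by cyclicity. Pulling the constants out of the integral yields $c^* \bigl(\int_0^\tau g_1(t) e^{\bar\mu t}\,dt\bigr) b$. Because $g_1(t)$ is real, $\int_0^\tau g_1(t) e^{\bar\mu t}\,dt$ is the complex conjugate of $\int_0^\tau g_1(t) e^{\mu t}\,dt = G_{1,\tau}(-\mu)$, where the last equality comes from the Laplace transform applied to the time-restricted impulse response in (\ref{gtau}). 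This gives the claimed identity.

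Part~(\ref{Prop12}) follows by computing $\|g_2\|_{H_2(\tau)}^2 = \int_0^\tau \operatorname{tr}(cb^* b c^*) e^{(\mu+\bar\mu) t}\,dt$, again by cyclicity. The trace simplifies to $(b^* b)(c^* c) = \|b\|^2 \|c\|^2$, and $\mu + \bar\mu = 2\operatorname{Re}(\mu)$, so the integral evaluates to $(e^{2\operatorname{Re}(\mu)\tau} - 1)/(2\operatorname{Re}(\mu))$. Rewriting this quotient as $|1 - e^{2\tau \operatorname{Re}(\mu)}|/(2|\operatorname{Re}(\mu)|)$ covers both signs of $\operatorname{Re}(\mu)$ uniformly and yields the stated formula after taking a square root. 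Part~(\ref{Proop13}) is analogous to part~(\ref{Prop11}): the only change is an extra factor of $t$ in the integrand, giving $c^*\bigl(\int_0^\tau t\, g_1(t) e^{\bar\mu t}\,dt\bigr) b$. Differentiating the defining integral $G_{1,\tau}(s) = \int_0^\tau g_1(t) e^{-st}\,dt$ under the sign (legitimate since the integration interval is finite) produces $G'_{1,\tau}(s) = -\int_0^\tau t\, g_1(t) e^{-st}\,dt$; evaluating at $s=-\mu$ and conjugating as before accounts for the minus sign in (\ref{Proop13}).

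There is no real obstacle here; the arguments are elementary manipulations. The one technical point worth stating carefully is the reality assumption on the state-space realization of $g_1$, which is what lets the conjugation pass through the integral to convert $e^{\mu t}$ to $e^{\bar\mu t}$, and hence identify the time integral with $\overline{G_{1,\tau}(-\mu)}$ rather than $G_{1,\tau}(-\bar\mu)$; the second interchange of conjugation and differentiation in part~(\ref{Proop13}) is justified in the same way.
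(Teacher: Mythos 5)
Your proposal is correct and follows essentially the same route as the paper's own proof: expand the $H_2(\tau)$ inner product, use cyclicity of the trace to sandwich the integrand between $c^*$ and $b$, and identify the resulting time integral with $\overline{G_{1,\tau}(-\mu)}$ (or its derivative) via the Laplace transform of the time-restricted impulse response. Your explicit remarks on the role of the reality of $g_1$ and on differentiating under the integral sign are points the paper leaves implicit, but the argument is the same.
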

\begin{proof} 
\begin{enumerate}
\item By definition of the $H_2(\tau)$ inner product,
\begin{align*}
\langle g_1,g_2 \rangle_{H_2(\tau)} &= \int_{0}^{\tau} Tr(g_1(t)bc^{*}e^{\mu^{*}t})dt \\
                                    &= \int_{0}^{\tau} c^{*}g_1(t)be^{\mu^{*}t}dt \\
                                    & = c^{*}\left(\overline{\int_{0}^{\tau}g_1(t)e^{\mu t}dt}\right)b \\
                                    &= c^{*}\left(\overline{\int_{0}^{\infty}g_{1,\tau}(t)e^{\mu t}dt}\right)b \\
                                    &= c^{*}\overline{G_{1,\tau}(-\mu)}b
\end{align*}
The last step in the derivation follows by the definition of the Laplace Transform.

\item If we have $g_1=g_2$, then by definition of $H_2(\tau)$ norm  we have  
$\langle g_2,g_2 \rangle_{H_2(\tau)} = \left\Vert g_2 \right\Vert_{H_2(\tau)}^{2}$.
\begin{align*}
\left\Vert g_2 \right\Vert_{H_2(\tau)}^{2} &= \int_{0}^{\tau} Tr(cb^{*}e^{\mu t}bc^{*}e^{\mu^{*}t} )dt \\
                    &= \int_{0}^{\tau} c^{*}\left\Vert b \right\Vert^{2} c e^{(\mu+\mu^{*})t}dt  \\
                    &= \int_{0}^{\tau} \left\Vert b \right\Vert^{2} \left\Vert c \right\Vert^{2} e^{2Re(\mu)t}dt\\
                    &=  \frac{\left\Vert b \right\Vert^{2} \left\Vert c \right\Vert^{2}}{2|Re(\mu)|}|e^{2(Re(\mu))\tau}-1| 
\end{align*}
Thus, we get
\begin{equation}
\left\Vert g_2 \right\Vert_{H_2(\tau)} = \frac{\left\Vert b \right\Vert\left\Vert c \right\Vert}{\sqrt{2|Re(\mu)|}} \sqrt{|e^{2(Re(\mu))\tau}-1|} 
\end{equation}

\item The proof again follows by the definition of the $H_2(\tau)$ norm.
\begin{align*}
\langle g_1,g_3 \rangle_{H_2(\tau)} &= \int_{0}^{\tau}Tr(g_1(t)bc^{*}te^{\mu^{*}t})dt\\
  &= \int_{0}^{\tau}c^{*}g_1(t)te^{\mu^{*}t}b dt \\
  &= c^{*} \left(\overline{\int_{0}^{\tau}g_1(t)te^{\mu t}dt}\right) b \\
  &= c^{*} \left(\overline{\int_{0}^{\infty}g_{1,\tau}(t)te^{\mu t}dt}\right) b \\
  &= -c^{*} \overline{G_{1,\tau}^{'} (-\mu)} b
\end{align*}
The last step follows as a property of the Laplace transform. 
\end{enumerate}
\end{proof}

\paragraph*{$H_2(\tau)$ optimal model reduction:}
For a good approximation of the system (\ref{eq1}) by the reduced system (\ref{eq2}) over the time interval $[0, \tau]$, we require $\left\Vert y(t)-\hat{y}(t)\right\Vert_{H_2(\tau)}$ to be small. The norm $\left\Vert \cdot \right\Vert_{H_2(\tau)}$ is defined over the time interval $t \in [0, \tau]$. The following expression is derived in \citet{goyal2019time}.
\begin{align}
\max_{t \in [0,\tau]} \left\Vert y(t)-\hat{y}(t) \right\Vert_2 &\leq \left\Vert g-\hat{g} \right\Vert_{H_2(\tau)}\left\Vert u \right\Vert_{L_2^{\tau}} 
\end{align}
where $ \left\Vert g-\hat{g} \right\Vert_{H_2(\tau)} = \left(\bigintss_{0}^{\tau}\left\Vert g(t)-\hat{g}(t) \right\Vert_{F}^{2}dt\right)^{\frac{1}{2}}$. Thus, minimizing this norm ensures that the reduced order model output $\hat{y}(t)$ is close to the original output $y(t)$ over the time interval $[0,\tau]$. 

The $H_2(\tau)$-optimal model reduction problem can be written as,
\begin{equation}\label{H2tau optimization}
\hat{\Sigma} = \operatorname*{arg\,min}_{\text{dim }(\Sigma_r) = r} \left\Vert \Sigma-\Sigma_r \right\Vert_{H_2(\tau)}
\end{equation}
The $H_2(\tau)$ optimization problem (\ref{H2tau optimization}) is non-convex and finding global minimizers will be difficult. In the theorem below, we derive the interpolation based first-order necessary conditions for $H_2(\tau)$ optimality of the reduced order system, $\hat{\Sigma}$. 

\begin{theorem}\label{Theorem 1}
Consider the full-order system (\ref{eq1}) with impulse response $g(t) \in 
\mathbb{R}^{p \times m}, t \in [0,\infty)$. Let $\hat{g}(t) \in \mathbb{R}^{p \times m}, t \in [0,\infty]$ be the impulse response of the reduced order system represented by (\ref{eq2}) which also solves the $H_2(\tau)$ optimization problem (\ref{H2tau optimization}). Assume that the reduced system is of order $r$ and is diagonalizable.
Let $\{\lambda_1,\cdots,\lambda_r\}$ be the $r$ simple poles of the transfer function of $\hat{G}(s)$. The impulse response matrix $\hat{g}(t)$ can be represented in the pole-residue form as,
\begin{align}\label{rt}
\hat{g}(t) &= \hat{C}e^{\hat{A}t}\hat{B} \\
     &= \sum_{k=1}^{r} \hat{c}_k{\hat{b}_k}^{*} e^{\lambda_k t}
\end{align}
with $\hat{A} \in \mathbb{R}^{r \times r}$, $\hat{B} \in \mathbb{R}^{r \times m} $, $\hat{C} \in \mathbb{R}^{p \times r}$, $\hat{c}_k \in \mathbb{C}^p $ and $\hat{b}_k \in \mathbb{C}^m$ for $k=1,\cdots,r$. Let $G_{\tau}(s)$ and $\hat{G}_{\tau}(s)$ be the Laplace transforms of $g_{\tau}(t)$ and $\hat{g}_{\tau}(t)$ respectively. Then for $k=1,2,\cdots,r$,
\begin{equation}\label{Time Limited Right Tangnetial Interpolation}
G_{\tau}(-\lambda_k)\hat{b}_k  =  \hat{G}_{\tau}(-\lambda_k)\hat{b}_k
\end{equation}
\begin{equation}\label{Time Limited Left Tangential Interpolation}
{\hat{c}_k}^{*}G_{\tau}(-\lambda_k)  =  {\hat{c}_k}^{*} \hat{G}_{\tau}(-\lambda_k)\\
\end{equation}
\begin{equation}\label{Time Limited Bi-Tangential Interpolation}
{\hat{c}_k}^{*} G_{\tau}^{'}(-\lambda_k)\hat{b}_k  = { \hat{c}_k}^{*}  \hat{G}_{\tau}^{'}(-\lambda_k) \hat{b}_k
\end{equation}
\end{theorem}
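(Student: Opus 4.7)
The plan is to adapt the derivation of the classical interpolation-based $H_2$ optimality conditions (quoted just before this theorem) to the time-limited inner product $\langle\cdot,\cdot\rangle_{H_2(\tau)}$, using Proposition \ref{Proposition 1} in place of the usual $H_2$-inner-product identities. The key observation is that Proposition \ref{Proposition 1} evaluates inner products of any impulse response against the tangent directions of a rational pole-residue expansion in closed form, so once $J := \|g-\hat g\|_{H_2(\tau)}^{2}$ is parameterised by $\{\lambda_k,\hat b_k,\hat c_k\}_{k=1}^{r}$, the stationarity conditions become interpolatory statements at the mirror points $-\lambda_k$ with the appropriate tangential directions. Writing $J = \|g\|_{H_2(\tau)}^{2} - 2\,\mathrm{Re}\langle g,\hat g\rangle_{H_2(\tau)} + \|\hat g\|_{H_2(\tau)}^{2}$, substituting $\hat g(t)=\sum_{k=1}^{r}\hat c_k\hat b_k^{*}e^{\lambda_k t}$ and invoking Proposition \ref{Proposition 1}(1) twice (with $g_1=g$ and then with $g_1=\hat g$) gives
\begin{align*}
\langle g,\hat g\rangle_{H_2(\tau)} &= \sum_{k=1}^{r}\hat c_k^{*}\,\overline{G_\tau(-\lambda_k)}\,\hat b_k,\\
\|\hat g\|_{H_2(\tau)}^{2} &= \sum_{k=1}^{r}\hat c_k^{*}\,\overline{\hat G_\tau(-\lambda_k)}\,\hat b_k,
\end{align*}
so that $J$ is an explicit function of the $3r$ parameters.

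Next I would impose the first-order necessary condition $\delta J = -2\,\mathrm{Re}\langle g-\hat g,\delta\hat g\rangle_{H_2(\tau)} = 0$ for every admissible tangent direction $\delta\hat g$ at the minimiser. The three independent tangent families obtained by perturbing $\hat b_k$, $\hat c_k$ and $\lambda_k$ individually are
\[
\delta_b\hat g = \hat c_k(\delta b)^{*}e^{\lambda_k t},\qquad \delta_c\hat g=(\delta c)\hat b_k^{*}e^{\lambda_k t},\qquad \delta_\lambda\hat g=\hat c_k\hat b_k^{*}\,t\,e^{\lambda_k t}\,\delta\lambda,
\]
with $\delta b\in\mathbb{C}^{m}$, $\delta c\in\mathbb{C}^{p}$, $\delta\lambda\in\mathbb{C}$ arbitrary. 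Applying Proposition \ref{Proposition 1}(1) to $\delta_b\hat g$ and $\delta_c\hat g$, and Proposition \ref{Proposition 1}(3) to $\delta_\lambda\hat g$, all with $g_1=g-\hat g$, converts the three stationarity statements into
\begin{align*}
\hat c_k^{*}\,\overline{(G_\tau-\hat G_\tau)(-\lambda_k)}\,(\delta b) &= 0,\\
(\delta c)^{*}\,\overline{(G_\tau-\hat G_\tau)(-\lambda_k)}\,\hat b_k &= 0,\\
\hat c_k^{*}\,\overline{(G_\tau-\hat G_\tau)'(-\lambda_k)}\,\hat b_k &= 0.
\end{align*}
The arbitrariness of $\delta b$ and $\delta c$ (including multiplication by $i$) collapses the first two identities to the left and right tangential interpolation conditions \eqref{Time Limited Left Tangential Interpolation} and \eqref{Time Limited Right Tangential Interpolation}, while the third is exactly \eqref{Time Limited Bi-Tangential Interpolation}.

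The step I expect to be the main obstacle is the bookkeeping of complex conjugation that takes the $\overline{G_\tau(-\lambda_k)}$ naturally produced by Proposition \ref{Proposition 1} to the $G_\tau(-\lambda_k)$ that appears in the stated conditions. Since $g-\hat g$ has a real realisation, $\overline{G_\tau(-\lambda_k)} = G_\tau(-\lambda_k^{*})$; matching the statement therefore requires either Wirtinger calculus, in which $\hat b_k$ and $\hat b_k^{*}$ (similarly $\hat c_k$ and $\hat c_k^{*}$) are treated as independent directions, or an appeal to the conjugate-pair structure of the complex poles: the stationarity equation indexed by $\lambda_{k'}=\lambda_k^{*}$ with $(\hat c_{k'},\hat b_{k'})=(\overline{\hat c_k},\overline{\hat b_k})$ is the complex conjugate of the one indexed by $\lambda_k$, and taken together the pair collapses to the form stated in the theorem. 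A subsidiary verification that the three families $\delta_b\hat g,\delta_c\hat g,\delta_\lambda\hat g$ span the tangent space to the manifold of order-$r$ rational functions with simple poles at the minimiser follows immediately from the diagonalisability hypothesis on $\hat A$.
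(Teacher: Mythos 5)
Your proposal follows essentially the same route as the paper's proof: both work with the pole--residue parametrisation $\{\lambda_k,\hat b_k,\hat c_k\}$, extract first-order stationarity from the optimality of $\hat g$ via perturbations in exactly the three families you list (residue directions for the two tangential conditions, the pole itself for the bi-tangential one), and invoke Proposition \ref{Proposition 1} to turn the resulting vanishing inner products into interpolation conditions at $-\lambda_k$. The only difference is presentational --- the paper runs an explicit $\epsilon$-perturbation with a phase factor $e^{i\theta_0}$ and the inequality $0\leq 2\,\mathrm{Re}\langle g-\hat g,\hat g-\hat g^{\epsilon}\rangle_{H_2(\tau)}+\Vert \hat g-\hat g^{\epsilon}\Vert_{H_2(\tau)}^2$ in place of your variational statement $\delta J=0$, and it is in fact less careful than you are about the complex-conjugation bookkeeping you flag at the end.
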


\begin{proof} 
Let $a \in \mathbb{C}^p$ be an arbitrary vector with $\left\Vert a \right\Vert=1$ and an index $k$ where $1 \leq k \leq r$. Assume that,
\begin{equation}
\langle g-\hat{g}, a{\hat{b}_k}^{*}e^{\lambda_k t} \rangle_{H_2(\tau)} = \alpha (\neq 0)
\end{equation}
Let $\arg(\alpha)=\theta_0$ and for some arbitrary $\epsilon > 0$ we define the following perturbation to $\hat{g}$,
\begin{equation}\label{rtep}
\hat{g}^{\epsilon}(t)=(\hat{c}_k+\epsilon e^{i\theta_0}a){\hat{b}_k}^{*}e^{\lambda_k t}+\sum_{i \neq k}\hat{c}_i {\hat{b}_i}^{*}e^{\lambda_i t}
\end{equation}
Using (\ref{rt}), (\ref{rtep}) and (\ref{Prop12}), 
\begin{align*}
\left\Vert \hat{g}-\hat{g}^{\epsilon} \right\Vert_{H_2(\tau)} &= \left\Vert -\epsilon^{-i\theta_0}a{\hat{b}_k}^{*}e^{\lambda_k t} \right\Vert_{H_2(\tau)} \\
       &= \epsilon \frac{\left\Vert a \right\Vert \left\Vert {{\hat{b}}_k}^{*} \right\Vert }{\sqrt{2|Re(\mu)|}}\sqrt{|1-e^{2\tau Re(\mu)}|} \\
       &= \epsilon \frac{ \left\Vert {{\hat{b}}_k}^{*} \right\Vert }{\sqrt{2|Re(\mu)|}}\sqrt{|1-e^{2\tau Re(\mu)}|} 
\end{align*}
Thus,
\begin{equation*}
\left\Vert \hat{g}-\hat{g}^{\epsilon} \right\Vert = \mathcal{O}(\epsilon) \qquad as \qquad \epsilon \to 0
\end{equation*}
As $\hat{g}$ solves the $H_2(\tau)$ optimization problem (\ref{H2tau optimization}) the following relation holds true,
\begin{align*}
\left\Vert g-\hat{g} \right\Vert_{H_2(\tau)}^2  &\leq \left\Vert g-\hat{g}^{\epsilon} \right\Vert_{H_2(\tau)}^2 \\
                                        &\leq \left\Vert (g-\hat{g})+(\hat{g}-\hat{g}^{\epsilon}) \right\Vert_{H_2(\tau)}^2\\
                                        &\leq  \left\Vert g-\hat{g} \right\Vert_{H_2(\tau)}^2 + 2\textrm{Re}\langle g-\hat{g},\hat{g}-\hat{g}^{\epsilon} \rangle_{H_2(\tau)}+ \\
&  \qquad \left\Vert \hat{g}-\hat{g}^{\epsilon} \right\Vert_{H_2(\tau)}^2
\end{align*}
Hence,
\begin{equation}
0 \leq 2\textrm{Re}\langle g-\hat{g},\hat{g}-\hat{g}^{\epsilon} \rangle_{H_2(\tau)}+\left\Vert \hat{g}-\hat{g}^{\epsilon} \right\Vert_{H_2(\tau)}^2
\end{equation}
Note that,
\begin{align*}
 & 2\textrm{Re}\langle g-\hat{g},\hat{g}-\hat{g}^{\epsilon} \rangle_{H_2(\tau)} \\ &= 2\textrm{Re}(\overline{-\epsilon e^{i\theta_0}}\langle g-\hat{g},a{b_k}^{*}e^{\lambda_k t} \rangle)_{H_2(\tau)} \\ &= 2\textrm{Re}(-\epsilon e^{i\theta_0}\alpha) \qquad (\textrm{By definition}) \\
&= -2\epsilon|\alpha|
\end{align*}
The above discussion implies that $0 \leq -\epsilon|\alpha|+\mathcal{O}(\epsilon^2)$. This is not true for arbitrary $\alpha$ and is possible only if $\alpha=0$. This will lead to the following result,
\begin{align*}
0 &= \langle g-\hat{g},a{\hat{b}_k}^{*}e^{\lambda_k t} \rangle_{H_2(\tau)}\\
  &= a^{*}\overline{(G_\tau-\hat{G}_{\tau})}(-\lambda_k)\hat{b}_k \quad (\textrm{ Using (\ref{Prop11})})\\
\end{align*}
As $g$ was chosen arbitrarily, we have
\begin{align*}
&{(G_\tau-\hat{G}_{\tau})}(-\lambda_k)\hat{b}_k =0 \\
& G_{\tau}(-\lambda_k)\hat{b}_k = \hat{G}_{\tau}(-\lambda_k)\hat{b}_k 
\end{align*}
This proves (\ref{Time Limited Right Tangnetial Interpolation}). We prove (\ref{Time Limited Left Tangential Interpolation}) in a similar way by repeating the above analysis with $\hat{c}_k d^{*}e^{\lambda_k t}$ instead of $a{\hat{b}_k}^{*}e^{\lambda_k t}$ for some arbitrary $d \in \mathbb{C}^m$ and $\left\Vert d \right\Vert=1$.

In order to prove (\ref{Time Limited Bi-Tangential Interpolation}), we assume that $\langle g-\hat{g},\hat{c}_k{\hat{b}_k}^{*}t e^{\lambda_k t} \rangle_{H_2(\tau)}=\beta(\neq 0)$. Let $\theta_1=\arg(\beta)$. For sufficiently small $\epsilon>0$,
\begin{align}
\hat{g}^{\epsilon}(t)=\hat{c}_k{\hat{b}_k}^{*}e^{(\lambda_k+\epsilon e^{-i\theta_1})t}+\sum_{i \neq k}\hat{c}_i{\hat{b}_i}^{*}e^{\hat{\lambda}_i t}
\end{align} 
For the above expression of $\hat{g}^{\epsilon}(t)$, the following holds
\begin{align*}
&\quad \, \left\Vert \hat{g}-\hat{g}^{\epsilon} \right\Vert_{H_2(\tau)}  \\
&= \left\Vert 
\hat{c}_k{\hat{b}_k}^{*}e^{\lambda_k t}(1-e^{\epsilon e^{-i\theta_1}t }) \right\Vert \\
&= \left\Vert \hat{c}_k{\hat{b}_k}^{*}e^{\lambda_k t} ( -\epsilon e^{-i\theta_1}t    +{\epsilon}^2 e^{-2i\theta_1}t^2+\cdots)\right\Vert  \\
&= \mathcal{O}(\epsilon)
\end{align*}
Following the same steps as above we get,
\begin{equation}
0 \leq 2\textrm{Re}\langle g-\hat{g},\hat{g}-\hat{g}^{\epsilon} \rangle_{H_2(\tau)}+\left\Vert \hat{g}-\hat{g}^{\epsilon} \right\Vert_{H_2(\tau)}^2
\end{equation}
and
\begin{align*}
& 2\textrm{Re}\langle g-\hat{g},\hat{g}-\hat{g}^{\epsilon} \rangle_{H_2(\tau)}\\
&=2\textrm{Re}\langle g-\hat{g},\hat{c}_k{\hat{b}_k}^{*}e^{\lambda_k t} ( -\epsilon e^{-i\theta_1}t ) \rangle_{H_2(\tau)}  \\
&= 2\textrm{Re}(-\epsilon e^{i\theta_1}\langle g-\hat{g},\hat{c}_k{\hat{b}_k}^{*}te^{\lambda_k t} \rangle_{H_2(\tau)})\\
&= 2\textrm{Re}(-\epsilon e^{i\theta_1}\langle g-\hat{g},\hat{c}_k{\hat{b}_k}^{*}te^{\lambda_k t} \rangle_{H_2(\tau)})\\
&= -2\epsilon |\beta|
\end{align*} 
The above discussion implies that $0 \leq -2\epsilon|\beta|+\mathcal{O}(\epsilon^2)$. This is not true for arbitrary $\beta$ and is possible only if $\beta=0$. 

Now, (\ref{Proop13}) leads to the following result,
\begin{align*}
0 &= \langle g-\hat{g},\hat{c}_k{\hat{b}_k}^{*}t e^{\lambda_k t} \rangle_{H_2(\tau)}\\
  &= -{\hat{c}_k}^{*}(G_\tau-\hat{G}_\tau)^{'}(-\lambda_k)\hat{b}_k
\end{align*}
which we get (\ref{Time Limited Bi-Tangential Interpolation}).
\end{proof}

\begin{remark}
Note that our method is different from the one used in \cite{sinani2019h2} to obtain the $H_2(\tau)$ optimality conditions and is similar to the procedure used to obtain frequency weighted $H_2$-optimality conditions in \cite{breiten2015near}.
\end{remark}

\begin{remark}
Theorem 1 says that the reduced system $\hat{\Sigma}$ is a $H_2(\tau)$ optimal approximation of the original system $\Sigma$ with respect to the $H_2(\tau)$ norm among all reduced systems having the same reduced system poles $\{ \hat{\lambda}_i \}_{i=1}^{r}$. The set of all systems having the same poles $\{ \hat{\lambda}_i \}_{i=1}^{r}$ is a subspace of $H_2(\tau)$ and the projection theorem gives necessary and sufficient condition for $\hat{g}$ to be a minimizer out of a subspace of candidate minimizers.
\end{remark}

\section{Rational Interpolation over a restricted time-interval}\label{Section 3}

Rational Krylov based model order reduction over a restricted time interval $[0,\tau]$ involves the computation of a pair of modified Rational Krylov spaces used to construct left and right projectors. A reduced order model is obtained via the Petrov-Galerkin projection method. We construct the projection subspaces for time-limited approximate rational interpolation as follows, 

\begin{equation}\label{RightProjectionSpace}
\mathcal{V}_r = \spann\limits_{i=1,\hdots,r}\{(\sigma_i I_n-A)^{-1}(I_n-e^{-\sigma_i\tau}e^{A\tau})Bb_i  \}
\end{equation}
and
\begin{equation}\label{LeftProjectionSpace}
\mathcal{W}_r = \spann\limits_{i=1,\hdots,r}\{(\sigma_i I_n-A^{*})^{-1}(I_n-e^{-\sigma_i\tau}e^{A^{*}\tau})C^{*}c_i  \}
\end{equation}
where $I_n$ is the $n\times n$ identity matrix.

We define matrices $V_r,W_r \in \mathbb{C}^{n \times r}$ such that
\begin{equation}\label{RightProjector}
\mathcal{V}_r \subset \textrm{Ran}(V_r)
\end{equation}
\begin{equation}\label{LeftProjector}
\mathcal{W}_r \subset \textrm{Ran}(W_r)
\end{equation}
The matrices $V_r$ and $W_r$ are the left and right projectors and a reduced order model is obtained via Petrov-Galerkin projection. The modified Krylov projection doesn't result in exact rational interpolation over a finite time interval. The deviation is quantified in Theorem \ref{Theorem 2}.

\begin{theorem} \label{Theorem 2}
 Let ${Z_r}^{*}=({W_r}^{*}V_r)^{-1}{W_r}^{*}$ and $\hat{A} = {Z_r}^{*}A{V_r}$, $\hat{B}={Z_r}^{*}B$, $\hat{C} = CV_r$ where $V_r$ and $W_r$ are obtained using (\ref{RightProjectionSpace}), (\ref{LeftProjectionSpace}), (\ref{RightProjector}) and (\ref{LeftProjector}). We construct a projection matrix $\Pi =V_r{Z_r}^{*}$.  Here, we assume that $D=0$. Let us choose any interpolation point $\sigma \in \{\sigma_1,\hdots,\sigma_r \}$ and tangent directions: $b$ and $c$. 
Assume that $\sigma \in \mathds{C}$ is not an eigenvalue of  A or $\hat{A}$. Then
\begin{equation}\label{Right Tangential Error}
\begin{aligned}
& G_{\tau}(\sigma)b-\hat{G}_{\tau}(\sigma)b \\
& = e^{-\sigma \tau}C{V_r}(\sigma I_r-\hat{A})^{-1}{Z_r}^{*}(e^{A\Pi\tau}-e^{A\tau})Bb
\end{aligned}
\end{equation}
\begin{equation}\label{Left Tangential Error}
\begin{aligned}
& c^{*}G_{\tau}(\sigma)-c^{*}\hat{G}_{\tau}(\sigma) \\
& = e^{-\sigma \tau}c^{*}C(e^{\Pi A \tau}-e^{A \tau})V_r(\sigma I_r-\hat{A}){Z_r}^{*}B
\end{aligned}
\end{equation}
\begin{equation}\label{Bi-Tangential ErrorP}
c^{*}{G_{\tau}}'(\sigma)b - c^{*}{\hat{G}_{\tau}}'(\sigma)b = R_{P_1}(\sigma) + R_{P_2}(\sigma) 
\end{equation}
\begin{equation}\label{Bi-Tangential ErrorQ}
c^{*}{G_{\tau}}'(\sigma)b - c^{*}{\hat{G}_{\tau}}'(\sigma)b = R_{Q_1}(\sigma)+R_{Q_2}(\sigma)
\end{equation}

Here, $R_{P_1}(\sigma), R_{P_2}(\sigma)$, $R_{Q_1}(\sigma),R_{Q_2}(\sigma)$ are,
\begin{align*}
R_{P_1}(\sigma) &= -e^{-\sigma \tau}c^{*}C {V_r}(\sigma I_r-\hat{A})^{-2}((\sigma I_r-\hat{A})\tau \\ 
& +I_r){Z_r}^{*}(e^{A\Pi\tau}-e^{A\tau})Bb \\
R_{P_2}(\sigma) &= e^{-\sigma \tau}c^{*}Ce^{A\tau}(I_n-P(\sigma))(\sigma I_n-A)^{-2}\\
& ((\tau(\sigma I_n-A)+I_n) e^{-\tau(\sigma I_n-A)}-I_n)Bb\\
R_{Q_1}(\sigma) &= -e^{-\sigma \tau}c^{*}C(e^{\Pi A \tau}-e^{A \tau})V_r(\sigma I_r-\hat{A})^{-2}\\ 
& (I_r+\tau(\sigma I_r-\hat{A})){Z_r}^{*}Bb\\
R_{Q_2}(\sigma) &= e^{-\sigma \tau}c^{*}C((\tau(\sigma I_n-A)+I_n) e^{-\tau(\sigma I_n-A)}-I_n)\\
&(\sigma I_n-A)^{-2}(I_n-Q(\sigma))e^{A\tau}Bb
\end{align*}
$P(s)$ and $Q(s)$ are projector matrices defined as,
\begin{align*}
P(s) &= V(s I_r-\hat{A})^{-1}Z^{*}(s I_n-A) \\
Q(s) &= (s I_n-A)V(s I_r-\hat{A})^{-1}Z^{*} 
\end{align*}
\end{theorem}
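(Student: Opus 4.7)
The plan is to prove the four error expressions by exploiting projection identities afforded by the modified Krylov spaces $\mathcal{V}_r$ and $\mathcal{W}_r$. Two preparatory observations are central. First, the intertwining identities $V_r e^{\hat{A}\tau} = e^{\Pi A\tau}V_r$ and $Z_r^{*}e^{A\Pi\tau} = e^{\hat{A}\tau}Z_r^{*}$, both obtained by expanding $e^{\Pi A\tau}$ as a power series and using $(\Pi A)^{k}V_r = V_r\hat{A}^{k}$. Second, a Sylvester-style identity $Z_r^{*}(\sigma I_n - A)^{-1}w = (\sigma I_r-\hat{A})^{-1}Z_r^{*}w$ valid whenever $(\sigma I_n - A)^{-1}w \in \textrm{Ran}(V_r)$, which follows from $Z_r^{*}AV_r = \hat{A}$ together with $Z_r^{*}V_r = I_r$. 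Combined with the rewriting $G_\tau(s) = C(sI_n - A)^{-1}(I_n - e^{-s\tau}e^{A\tau})B$, these reduce each claim to a short projection argument.

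For the right tangential error~(\ref{Right Tangential Error}), the vector $v := (\sigma I_n - A)^{-1}(I_n - e^{-\sigma\tau}e^{A\tau})Bb$ is a generator of $\mathcal{V}_r$, so $\Pi v = v$ and $G_\tau(\sigma)b = Cv = \hat{C}Z_r^{*}v$. The Sylvester identity yields $Z_r^{*}v = (\sigma I_r - \hat{A})^{-1}(\hat{B} - e^{-\sigma\tau}Z_r^{*}e^{A\tau}B)b$, and the decomposition $Z_r^{*}e^{A\tau}B = e^{\hat{A}\tau}\hat{B} + Z_r^{*}(e^{A\tau}-e^{A\Pi\tau})B$ extracts $\hat{G}_\tau(\sigma)b$ from the first summand and the claimed error from the second. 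The left tangential error~(\ref{Left Tangential Error}) is dual: the vector $u := (\sigma I_n - A^{*})^{-1}(I_n - e^{-\sigma\tau}e^{A^{*}\tau})C^{*}c$ lies in $\textrm{Ran}(W_r)$, whence $u^{*} = u^{*}\Pi$, and consequently $u^{*}$ annihilates the residual $R_V := AV_r - V_r\hat{A} = (I-\Pi)AV_r$. Combining this annihilation with the Sylvester identity $(\sigma I_n - A)^{-1}V_r = V_r(\sigma I_r - \hat{A})^{-1} + (\sigma I_n - A)^{-1}R_V(\sigma I_r - \hat{A})^{-1}$ reduces $c^{*}G_\tau(\sigma) = u^{*}B = u^{*}V_r\hat{B}$ to $c^{*}C(I_n - e^{-\sigma\tau}e^{A\tau})V_r(\sigma I_r - \hat{A})^{-1}\hat{B}$. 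The intertwining identity analogously gives $c^{*}\hat{G}_\tau(\sigma) = c^{*}C(I_n - e^{-\sigma\tau}e^{\Pi A\tau})V_r(\sigma I_r - \hat{A})^{-1}\hat{B}$, and subtracting isolates the advertised form.

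For the bi-tangential identities~(\ref{Bi-Tangential ErrorP})--(\ref{Bi-Tangential ErrorQ}), I would first derive an $s$-wise decomposition $G_\tau(s)-\hat{G}_\tau(s) = E_1(s) + E_2(s)$ valid for every $s$, where $E_1(s) = C(I-P(s))(sI_n - A)^{-1}(I_n - e^{-s\tau}e^{A\tau})B$ and $E_2(s) = e^{-s\tau}CV_r(sI_r - \hat{A})^{-1}Z_r^{*}(e^{A\Pi\tau}-e^{A\tau})B$, together with its right-sided counterpart that places $(I - Q(s))$ to the right of $(sI_n - A)^{-1}$. Differentiating in $s$ and evaluating at $\sigma$, the piece $c^{*}E_2'(\sigma)b$ produces $R_{P_1}$ (respectively $R_{Q_1}$) via the product-rule identity $\tfrac{d}{ds}[e^{-s\tau}(sI_r-\hat{A})^{-1}] = -e^{-s\tau}(sI_r-\hat{A})^{-2}[I_r + \tau(sI_r-\hat{A})]$. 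The piece $c^{*}E_1'(\sigma)b$ collapses to $R_{P_2}$ (respectively $R_{Q_2}$) using two facts at $\sigma$: the vanishing $(I-P(\sigma))w(\sigma) = 0$ for $w(\sigma) := (\sigma I_n - A)^{-1}(I_n - e^{-\sigma\tau}e^{A\tau})Bb \in \textrm{Ran}(V_r)$, which annihilates the $-C(\sigma I_n - A)^{-2}(I-P(\sigma))w(\sigma)$ subterm, and the vanishing $P'(\sigma)w(\sigma) = 0$, verified by a short power-series calculation that exploits $Z_r^{*}V_r = I_r$ and $Z_r^{*}(\sigma I_n - A)V_r = \sigma I_r - \hat{A}$.

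The main obstacle is the bi-tangential step. Two interacting difficulties arise: the projectors $P(s)$ and $Q(s)$ depend on $s$, so differentiation generates extra $P'$- and $Q'$-contributions whose vanishing at $\sigma$ is short but delicate and depends on the precise algebraic form of the projector; and packaging the remaining terms into the compound $(\tau(\sigma I_n - A) + I_n)e^{-\tau(\sigma I_n - A)} - I_n$ that appears in $R_{P_2}$ and $R_{Q_2}$, with $e^{A\tau}$ placed asymmetrically relative to the projector complement, requires careful non-commutative manipulation since $e^{A\tau}$ does not commute with $P(\sigma)$ or $Q(\sigma)$, and the rearrangement is not symmetric between the P-form and the Q-form and must be carried out separately for each.
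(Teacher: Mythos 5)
Your proposal follows essentially the same route as the paper: the same two-term $s$-wise decompositions of the tangential errors via the projectors $P(s)$ and $Q(s)$ (your Sylvester and intertwining identities are just these decompositions written vector-wise), and the same differentiate-and-annihilate strategy, with $(I_n-P(\sigma))w(\sigma)=0$ and $P'(\sigma)w(\sigma)=0$, for the bi-tangential conditions. The one step you flag as the remaining obstacle --- converting $c^{*}C(I_n-P(\sigma))w'(\sigma)$ into $R_{P_2}$ with the asymmetric factor $e^{-\sigma\tau}e^{A\tau}$ --- closes immediately by splitting $I_n=(I_n-e^{-\sigma\tau}e^{A\tau})+e^{-\sigma\tau}e^{A\tau}$ and using $c^{*}C(I_n-e^{-\sigma\tau}e^{A\tau})(I_n-P(\sigma))=0$ (a consequence of your own $u^{*}=u^{*}\Pi$, equivalently $u^{*}(I_n-Q(\sigma))=0$, together with $(\sigma I_n-A)(I_n-P(\sigma))=(I_n-Q(\sigma))(\sigma I_n-A)$), which is precisely why the paper pre-arranges the bi-tangential error into a three-term form whose leading term vanishes to second order at $\sigma$.
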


\begin{proof}
Based on the assumption that $\sigma$ is not not eigenvalue of $A$ and $\hat{A}$, $P(s)$ and $Q(s)$ are matrix-valued analytic functions for all $s$ in a neighbourhood of $s=\sigma$. Note that, $P^2(s)=P(s)$ and $Q^2(s)=Q(s)$. Also, we may verify that $\mathcal{V}=\mathrm{Ran  }P(s)=\mathrm{Ker}(I_n-P(s))$ and ${\mathcal{W}}^{\perp}=\mathrm{Ker}Q(s)=\mathrm{Ran}(I_n-Q(s))$. 

The error in the right tangential interpolation condition is derived as follows, 
\begin{equation}\label{Right Interpolation condition1}
\begin{aligned}
& G_{\tau}(s)b-\hat{G}_{\tau}(s)b \\                  &=C(sI_n-A)^{-1}(I_n-e^{-s\tau}e^{A\tau})Bb-\\
&\hat{C}(sI_r-\hat{A})^{-1}(I_r-e^{-s\tau}e^{\hat{A}\tau})\hat{B}b
\end{aligned}
\end{equation}
Substituting $\hat{A}={Z_r}^{*}A{V_r}$, $\hat{B}={Z_r}^{*}B$, $\hat{C}=C{V_r}$ and using the identity $e^{\hat{A}\tau}\hat{B}={Z_r}^{*}e^{A\Pi \tau}$(here $\Pi=V_r {Z_r}^{*}$), the second term in the RHS of equation (\ref{Right Interpolation condition1}) can be rewritten in the following way,
\begin{align*}
  & \hat{C}(sI_r-\hat{A})^{-1}(I_r-e^{-s\tau}e^{\hat{A}\tau})\hat{B}b\\
  &= \hat{C}(sI_r-\hat{A})^{-1}(\hat{B}-e^{-s\tau}e^{\hat{A}\tau}\hat{B})b \\
  &= CV_r (sI_r-\hat{A})^{-1}({Z_r}^{*}B-{Z_r}^{*}e^{-s\tau}e^{A\Pi \tau}B)b \\
  &= CV_r (sI_r-\hat{A})^{-1}{Z_r}^{*}(I_n-e^{-s\tau}e^{A\Pi \tau})Bb \\
  &=  CV_r (sI_r-\hat{A})^{-1}{Z_r}^{*}(I_n-e^{-s\tau}e^{A \tau})Bb \\
  &+ e^{-s\tau}CV_r (sI_r-\hat{A})^{-1}{Z_r}^{*}(e^{A\tau}-e^{A\Pi\tau})Bb \\
  &= CV_r (sI_r-\hat{A})^{-1}{Z_r}^{*}(I_n-e^{-s\tau}e^{A \tau})Bb \\
  &+ e^{-s\tau}CP(s)(sI_n-A)^{-1}(e^{A\tau}-e^{A\Pi\tau})Bb
\end{align*}  
Substituting the above expression in (\ref{Right Interpolation condition1}) we get,
\begin{equation}\label{Right Interpolation Condition3}
\begin{aligned}
& G_{\tau}(s)b-\hat{G}_{\tau}(s)b \\
& = C[I_n-P(s)](sI-A)^{-1}(I_n-e^{-s\tau}e^{A\tau})Bb\\
& +e^{-s\tau}CV_r(s I_r-\hat{A})^{-1}{Z_r}^{*}(e^{A\Pi \tau}-e^{A\tau})Bb 
\end{aligned}
\end{equation}
Evaluating the expression (\ref{Right Interpolation Condition3}) at $s=\sigma$ we get (\ref{Right Tangential Error}).

Replacing $\hat{A}$, $\hat{B}$, $\hat{C}$ similar to the case of right tangential interpolation error and using the identity $\hat{C}e^{\hat{A}\tau}=Ce^{\Pi A \tau}V$  we derive the left tangential interpolation error,

\begin{equation}\label{Left Interpolation condition1}
\begin{aligned}
&c^{*}G_{\tau}(s)-c^{*}\hat{G}_{\tau}(s)\\
&=  c^{*}C(I_n-e^{-s\tau}e^{A\tau})(s I_n-A)^{-1}B-\\
&c^{*}\hat{C}(I_r-e^{-s\tau}e^{\hat{A}\tau})(s I_r-\hat{A})^{-1}\hat{B}
\end{aligned}
\end{equation}

The second term in the RHS of the above equation can be expressed as,
\begin{align*} 
& c^{*}\hat{C}(I_r-e^{-s\tau}e^{\hat{A}\tau})(sI_r-\hat{A})^{-1}\hat{B} \\
&= c^{*}(\hat{C}-e^{-s\tau}\hat{C}e^{\hat{A}\tau})(sI_r-\hat{A})^{-1}\hat{B} \\
&= c^{*}(CV_r-e^{-s\tau}Ce^{\Pi A\tau}V_r)(sI_r-\hat{A})^{-1}{Z_r}^{*}B \\
&= c^{*}C(I_n-e^{-s\tau}e^{\Pi A\tau})V_r(sI_r-\hat{A})^{-1}{Z_r}^{*}B \\
&= c^{*}C(I_n-e^{-s\tau}e^{A\tau})V_r(sI_r-\hat{A})^{-1}{Z_r}^{*}B+ \\
& e^{-s\tau}c^{*}C(e^{A\tau}-e^{\Pi A\tau})V_r(sI_r-\hat{A})^{-1}{Z_r}^{*}B \\
&=  c^{*}C(I_n-e^{-s\tau}e^{A\tau})(sI_n-A)^{-1}Q(s)B +\\
&  e^{-s\tau}c^{*}C(e^{A\tau}-e^{\Pi A\tau})V_r(sI_r-\hat{A})^{-1}{Z_r}^{*}B
\end{align*}
Substituting the above expression in (\ref{Left Interpolation condition1}) and representing $I_n-e^{-s\tau}e^{A\tau}$ as $E(s)$ we get,
\begin{equation}\label{Left Interpolation condition3}
\begin{aligned}
& c^{*}G_{\tau}(s)-c^{*}\hat{G}_{\tau}(s) = \\ 
& ((I_n-e^{-s\tau}e^{A^{*}\tau})(sI_n-A^{*})^{-1}C^{*}c)^{*}[I-Q(s)]B\\  &+e^{-s\tau}c^{*}C(e^{\Pi A\tau}-e^{A\tau})V_r(s I_r-\hat{A})^{-1}{Z_r}^{*}B
\end{aligned}
\end{equation}
We have to evaluate (\ref{Left Interpolation condition3}) at $s=\sigma$ to obtain (\ref{Left Tangential Error}).

Left multiplying by $c^{*}$ and after certain rearrangements of equation (\ref{Right Interpolation Condition3}) we get,
\begin{equation}\label{Bi tangential Interpolation condition1}
\begin{aligned}
& c^{*}G_{\tau}(s)b-c^{*}\hat{G}_{\tau}(s)b  \\
&= c^{*}CE(s)(sI_n-A)^{-1}(I_n-Q(s))\\
& (sI_n-A)(I-P(s))(sI_n-A)^{-1}E(s)Bb \\
& + c^{*}Ce^{-s\tau}e^{A\tau}(I_n-P(s))(sI_n-A)^{-1}E(s)Bb \\
& e^{-s\tau}c^{*}CV_r(sI_r-\hat{A})^{-1}{Z_r}^{*}(e^{A\Pi\tau}B-e^{A\tau}B)b  
\end{aligned}
\end{equation}

Differentiating (\ref{Bi tangential Interpolation condition1}) with respect to $s$ and evaluating them at $s=\sigma$ we get the two terms given by 
(\ref{Bi-Tangential ErrorP}).  

Right multiplying by $b$ and after certain rearrangements of equation (\ref{Left Interpolation condition3}) we get,
\begin{equation}\label{Bi tangential Interpolation condition2}
\begin{aligned}
& c^{*}G_{\tau}(s)b-c^{*}\hat{G}_{\tau}(s)b \\ 
&= c^{*}CE(s)(sI_n-A)^{-1}(I_n-Q(s))\\
&(sI_n-A)(I-P(s))(sI_n-A)^{-1}E(s)Bb \\
&+ c^{*}CE(s)(sI_n-A)^{-1}(I_n-Q(s))e^{-s\tau}e^{A\tau}Bb \\
&+  e^{-s\tau}c^{*}C(e^{\Pi A\tau}-e^{A\tau})V_r(sI_r-\hat{A})^{-1}{Z_r}^{*}Bb   
\end{aligned}
\end{equation}

Differentiating (\ref{Bi tangential Interpolation condition2}) with respect to $s$ and  evaluating at $s=\sigma$ we get (\ref{Bi-Tangential ErrorQ}). 
\end{proof}

Theorem 2 quantifies the interpolation errors based on which we can predict the performance of a limited time interpolation based model reduction scheme. The right tangential error depends on the interpolation point and the final time($\tau$). If the real part of the interpolation point is sufficiently positive or if $\tau \ll $, the right interpolation error will be  minor.  A lesser distance of the vector $(e^{A\Pi \tau}-e^{A\tau})Bb$ from the Kernel of $W_r^{*}$ also reduces the error. 

Similar to the previous case, an interpolation point with an adequately positive real part or sufficiently small final time $\tau$ ensures a slight left tangential error. In addition, the closeness of the vector $(e^{A^{*}{\Pi}^{*}\tau}-e^{A^{*}\tau})C^{*}c$ to the Kernel of $V_r^{*}$ also ensures a slight right tangential error. 

For the bi-tangential error (\ref{Bi-Tangential ErrorP},\ref{Bi-Tangential ErrorQ}) to be small, we require either $ R_{P_1}(\sigma)$ and $R_{P_2}(\sigma)$ or $ R_{Q_1}(\sigma)$ and $R_{Q_2}(\sigma)$ to be small. The criteria which ensure a small right tangential error and a small left tangential error as discussed previously also ensures a smaller bi-tangential error.

\begin{corollary}\label{Corollary 1}
Let $\hat{G}$ be the reduced order model discussed in Theorem \ref{Theorem 2}. Then $\hat{G}_{\tau}(s)$ nearly tangentially interpolates $G_{\tau}(s)$ at each interpolation point $\{\sigma_1,\sigma_2,\hdots,\sigma_r\}$ in corresponding tangential directions $\{\hat{b}_1,\hat{b}_2,\hdots,\hat{b}_r\}$ and $\{\hat{c}_1,\hat{c}_2,\hdots,\hat{c}_r\}$.  
\end{corollary}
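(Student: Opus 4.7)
The plan is to invoke Theorem~\ref{Theorem 2} directly at each interpolation point $\sigma_i$ with the corresponding tangential directions $\hat{b}_i$ and $\hat{c}_i$. The theorem already furnishes closed-form expressions for all three tangential interpolation errors, so the task reduces to inspecting these expressions and identifying the factors that make them small. Since the word ``nearly'' in the statement is qualitative, the proof is essentially a direct reading of Theorem~\ref{Theorem 2}; no additional machinery is needed.

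First, substituting $\sigma = \sigma_i$ and $b = \hat{b}_i$ into (\ref{Right Tangential Error}) gives
\[
G_\tau(\sigma_i)\hat{b}_i - \hat{G}_\tau(\sigma_i)\hat{b}_i \;=\; e^{-\sigma_i\tau}\,CV_r(\sigma_i I_r-\hat{A})^{-1}Z_r^{*}\bigl(e^{A\Pi\tau}-e^{A\tau}\bigr)B\hat{b}_i .
\]
I would argue that this product is small in at least one of three regimes: (i) $\text{Re}(\sigma_i)$ is sufficiently positive, so the scalar $e^{-\sigma_i\tau}$ is negligible; (ii) $\tau$ is sufficiently small, so that $e^{A\Pi\tau}-e^{A\tau}$ is close to zero by continuity of the matrix exponential; or (iii) the vector $(e^{A\Pi\tau}-e^{A\tau})B\hat{b}_i$ lies close to the kernel of $Z_r^{*}$, i.e., close to the kernel of $W_r^{*}$. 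Under any of these conditions, the tangential interpolation condition (\ref{H2 Right Interpolation}) is approximately satisfied at $\sigma_i$ along $\hat{b}_i$.

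A completely symmetric argument applied to (\ref{Left Tangential Error}) with $\sigma = \sigma_i$ and $c = \hat{c}_i$ handles the left tangential interpolation: the same three regimes -- positive real part of $\sigma_i$, small $\tau$, or closeness of $(e^{\Pi A\tau}-e^{A\tau})^{*}C^{*}\hat{c}_i$ to the kernel of $V_r^{*}$ -- ensure that the left-hand error is small. For the bi-tangential condition I would use either representation (\ref{Bi-Tangential ErrorP}) or (\ref{Bi-Tangential ErrorQ}); each splits the error into two pieces. In (\ref{Bi-Tangential ErrorP}) the term $R_{P_1}(\sigma)$ inherits the projection residual $(e^{A\Pi\tau}-e^{A\tau})$ premultiplied by $Z_r^{*}$, so it is small exactly when the right tangential error is small; the term $R_{P_2}(\sigma)$ carries the idempotent factor $(I_n-P(\sigma))$, which annihilates vectors lying in $\mathcal{V}_r$, together with the scalar $e^{-\sigma\tau}$. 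The representation (\ref{Bi-Tangential ErrorQ}) provides the dual split using $(I_n-Q(\sigma))$ and the left residual.

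The main (and only) obstacle is really interpretational rather than computational: the statement ``nearly tangentially interpolates'' is not a hard inequality, so the argument is to read off the error formulas from Theorem~\ref{Theorem 2} as upper bounds and observe that the same mechanisms simultaneously suppress all three tangential errors. I would close the proof by remarking that when the interpolation points $\sigma_i$ are chosen as mirror images of the reduced poles (as will occur in LT-IRKA of Section~\ref{Section 4}), $\text{Re}(\sigma_i)>0$ is guaranteed whenever $\hat{A}$ is Hurwitz, so the factor $e^{-\sigma_i\tau}$ alone is typically enough to render the three tangential residues small.
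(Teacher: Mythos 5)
Your proposal is correct and follows essentially the same route as the paper: the corollary is treated there as an immediate qualitative consequence of Theorem~\ref{Theorem 2}, justified by the very same observations you make (smallness of $e^{-\sigma_i\tau}$ for $\mathrm{Re}(\sigma_i)$ sufficiently positive or $\tau$ small, proximity of the residual vectors to the kernels of $W_r^{*}$ and $V_r^{*}$, and the bi-tangential error being controlled by the same mechanisms through $R_{P_1},R_{P_2}$ or $R_{Q_1},R_{Q_2}$). Your closing remark about mirror images of the reduced poles in LT-IRKA is a reasonable addition but not part of the paper's argument.
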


The interpolation points along with the right and left tangential directions and the additional conditions discussed in the previous theorem determine how well $\hat{G}_{\tau}(s)$ approximates $G_{\tau}(s)$. 

\section{\textit{Limited Time IRKA}}\label{Section 4}
 Theorem 1 asserts that the time-limited optimality conditions will be satisfied provided the interpolation points determining the reduced order model tally with the reflected poles of the reduced model. The difficulty in constructing such reduced order models is that one doesn't know how to choose such interpolation data apriori. 

Inspired by the IRKA algorithm in \citet{gugercin2008h_2} and time-limited rational interpolation discussed in the previous section, we propose the Limited Time Iterative Rational Krylov Algorithm (LT-IRKA). 

\begin{algorithm}[h]
   \caption{LT-IRKA}
   \KwIn{The system matrices: $A,B,C$ \newline Initial interpolation points: $\{\sigma_1,\hdots,\sigma_r \}$;\newline Initial tangential directions: $\tilde{B}=[b_1,\hdots,b_{r}]$ and $\tilde{C}=[c_1,\hdots,c_r]$}.
   \KwOut{The reduced matrices $\hat{A},\hat{B},\hat{C}$} 
   \While{(\textrm{relative change in} $\{\sigma_i\}>$\textrm{tol})}{
    1.  Compute $V_r$ and $W_r$ using (\ref{RightProjectionSpace}),(\ref{LeftProjectionSpace}),(\ref{RightProjector}) and (\ref{LeftProjector}) respectively\;
    2.  Update ROM: 
    \begin{equation}\label{Reduced Matrices}
        \begin{aligned}
        \hat{A}&=(W_r^{*}V_r)^{-1}{W_r}^{*}AV_r \\
        \hat{B}&=(W_r^{*}V_r)^{-1}{W_r}^{*}B \\
        \hat{C}&=CV_r 
        \end{aligned}
    \end{equation}\;
    3. $\sigma_i=-\lambda_i(\Lambda)$, $\hat{A}=R\Lambda R^{-1}$, $\tilde{B}={B_r}^{*}R^{-*} $
    \newline and $\tilde{C}=C_r R$  \;
}
\end{algorithm}

 LT-IRKA produces high-fidelity reduced order models for the time-limited $H_2$ norm. Also, similar to TL-TSIA, the reduced model satisfies the $H_2(\tau)$ optimality conditions approximately. The nearness of the reduced order model to the actual optimality conditions can be gauged using the requirements given by Theorem \ref{Theorem 2}. 

\textit{Initialization:} 
We can choose the initial interpolation points and tangential directions for LT-IRKA using various techniques. Randomly selecting the initial interpolation points and tangential directions, initializing conventional IRKA randomly and using the reduced order system to obtain the initial interpolation points and tangential directions, effectively computing eigenvalues and corresponding left and right eigenvectors corresponding to the dominant residues using the dominant pole algorithm \citep{rommes2006efficient} etc.\ are some of the possible ways for initializing LT-IRKA.

\textit{Algorithm Implementation:}
For practical purposes, we require the reduced order equivalent of a real system to be real. This is possible if the complex interpolation points (and associated tangential directions) are chosen in conjugate pairs. This is a standard practice in Krylov subspace methods wherein the interpolation data is grouped into conjugate pairs to obtain a real basis satisfying (\ref{LeftProjectionSpace}) or (\ref{RightProjectionSpace}) \citep{grimme1997krylov}. The reduced order model is obtained using Petrov-Galerkin Projection. The reflection of the poles and the residues of the reduced order model are updated as the interpolation points and tangential directions for the next iteration, projectors are constructed and the reduced order model is obtained. This is continued till the convergence criteria is satisfied.

\textit{Convergence:}
 We choose the norm of the difference of eigenvalues of the reduced state matrix $\hat{A}$ for consecutive iterations as the convergence criteria for LT-IRKA, which is similar to IRKA. Though formal proof of the convergence of LT-IRKA is missing, we have tested the LT-IRKA algorithm on various examples. The algorithm converges after a finite number of iterations for a proper set of initial interpolation points and tangential directions.
 
\begin{remark}
The main difficulty in implementing the LT-IRKA algorithm involves computing the quantity $e^{A\tau}b$. Various methods are proposed in \cite{kurschner2018balanced} to handle this issue in a computationally efficient manner. In our work, we use the MATLAB routine \textit{expm} to compute  $e^{A\tau}$ and multiply it with the vector $b$ similar to \citet{goyal2019time}. This method works well for systems of order upto 1000. Apart from this, LT-IRKA  involves the use of matrix-vector multiplications and some linear solvers as in other Krylov based model reduction strategies. 
\end{remark}

\subsection{ Comparison with TL-TSIA: }\label{Subsection 4_1} 
The TL-TSIA is a near $H_2(\tau)$-optimal model order reduction algorithm which deals with Lyapuonv based $H_2(\tau)$ optimality conditions. Consider the system (\ref{eq1}) with state matrices $(A,B,C)$. The initial reduced order system (\ref{eq2}) with state matrices $(\hat{A}, \hat{B}, \hat{C})$ is obtained by using IRKA to reduce system (\ref{eq1}). Thereafter, right and left projectors are obtained by solving the following Sylvester equations. 
\begin{equation}\label{SylvestreEq1}
AP+P\hat{A}^{*} + B{\hat{B}}^{*}-e^{A\tau}B{\hat{B}}^{*}e^{{\hat{A}}^{*}\tau}=0 
\end{equation}
\begin{equation}\label{SylvesterEq2}
A^{*}Q+Q\hat{A}+C^{*}{\hat{C}}-e^{A^{*}\tau}C^{*}{\hat{C}}e^{\hat{A}\tau} =0 .
\end{equation} 
We select $P$ as the right projector. The left projector is obtained as $(Q^{*}P)^{-1}Q^{*}$ to ensure that $((Q^{*}P)^{-1}Q^{*})P=I_r$. The projectors are used to construct new reduced matrices $(\hat{A},\hat{B},\hat{C})$ using (\ref{Reduced Matrices}) which are used in the next iteration to construct a new set of projectors and the iterations are continued until the relative change in the eigenvalues of the reduced state matrix $\hat{A}$ become less than a fixed value. This is similar to the convergence criteria used in the algorithm LT-IRKA proposed in this work. Inspired by the similarity comparison between TSIA and IRKA in \citet{benner2011sparse} for SISO systems, we propose a theorem that shows that TL-TSIA and LT-IRKA produce similar projection subspaces and hence will result in identical reduced order models upon convergence.

\begin{theorem} \label{Theorem 3}
The system $\Sigma$ (\ref{eq1}) is a MIMO LTI system and $\hat{\Sigma}$ (\ref{eq2}) is the reduced order equivalent of $\Sigma$. We denote $\mathcal{V}$ and $\mathcal{W}$ as the final right and left projection spaces, respectively, obtained when LT-IRKA converges for $\Sigma$. $P$ and $Q$ are the final right and left projection matrices obtained using TL-TSIA. Assume that final time $\tau$ is kept constant, and both the algorithms converge. Then the following is true:
\begin{align*}
\mathcal{V} \equiv \textrm{span}(P)\\
\mathcal{W} \equiv \textrm{span}(Q)
\end{align*} 
\end{theorem}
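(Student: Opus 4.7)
The plan is to exploit a closed-form integral representation of the Sylvester equation solutions (\ref{SylvestreEq1})--(\ref{SylvesterEq2}), expand these using the spectral decomposition of $\hat{A}$, and match the resulting column generators against those of the LT-IRKA projection spaces (\ref{RightProjectionSpace})--(\ref{LeftProjectionSpace}) at a fixed point. The first step is to verify that $P=\int_0^\tau e^{At}B\hat B^* e^{\hat A^* t}\,dt$ and $Q=\int_0^\tau e^{A^* t}C^* \hat C\, e^{\hat A t}\,dt$ solve (\ref{SylvestreEq1}) and (\ref{SylvesterEq2}) respectively; this is a one-line check using the identity $\frac{d}{dt}\bigl(e^{At} X e^{\hat A^* t}\bigr)=A e^{At} X e^{\hat A^* t}+ e^{At} X e^{\hat A^* t}\hat A^*$ and the fundamental theorem of calculus, since the boundary terms reproduce precisely the forcing in (\ref{SylvestreEq1})--(\ref{SylvesterEq2}).

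Next, at an LT-IRKA fixed point with $\hat{A}=R\Lambda R^{-1}$, the interpolation points are $\sigma_i=-\lambda_i$, and the right and left tangent directions $b_i$, $c_i$ are, respectively, the $i$th columns of $\hat B^* R^{-*}$ and $\hat C R$. Substituting $e^{\hat A^* t}=R^{-*}e^{\Lambda^* t} R^*$ into the integral for $P$ and right-multiplying by $R^{-*}$ diagonalizes the exponential factor, so the $i$th column of $P R^{-*}$ reduces to $\int_0^\tau e^{\bar\lambda_i t}e^{At}B b_i\, dt$, which, under the standing assumption $-\bar\lambda_i\notin\mathrm{spec}(A)$, evaluates in closed form to
\[
(PR^{-*})_i = (\bar\sigma_i I_n-A)^{-1}\bigl(I_n-e^{-\bar\sigma_i\tau}e^{A\tau}\bigr)Bb_i.
\]
This is exactly the LT-IRKA generator in (\ref{RightProjectionSpace}), with $\sigma_i$ replaced by $\bar\sigma_i$. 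Since $R^{-*}$ is invertible, $\textrm{span}(P)=\textrm{span}(PR^{-*})$; a fully symmetric argument expands $R^{-1}Q$ and identifies its columns with the generators of $\mathcal{W}$ in (\ref{LeftProjectionSpace}). The final step is to invoke the real structure of $\hat A$: its spectrum is closed under complex conjugation, and $R$ can be chosen so that a conjugate pair of eigenvalues has conjugate eigenvectors, whence $b_{\pi(i)}=\bar b_i$ and $c_{\pi(i)}=\bar c_i$ for the conjugation-swap permutation $\pi$. This symmetry lets one rewrite $\{(PR^{-*})_i\}$ in terms of $\{v_i\}$, yielding $\textrm{span}(P)\equiv\mathcal V$ and, analogously, $\textrm{span}(Q)\equiv\mathcal W$.

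The hard part will be this last reconciliation: the integral-based derivation naturally delivers generators parametrized by the \emph{conjugate} spectral parameter $\bar\sigma_i$, whereas the LT-IRKA formula (\ref{RightProjectionSpace}) uses $\sigma_i$ paired with the direction $b_i$. Establishing equality of the real spans therefore demands careful bookkeeping of complex-conjugate pairs $(\lambda_i,b_i,c_i)$: one uses that $\{\sigma_i\}$ and $\{b_i\}$ are jointly closed under conjugation, so that the two parametrizations produce the same unordered collection of generating vectors modulo conjugation, and hence the real spans (obtained by taking real and imaginary parts of each conjugate-pair generator) coincide.
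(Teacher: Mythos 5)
Your argument is correct and lands on exactly the same key identity as the paper's proof: after diagonalizing $\hat A$, the $i$th column of $P$ transformed by the inverse--conjugate--transpose of the eigenvector matrix equals $(\bar\sigma_i I_n-A)^{-1}(I_n-e^{-\bar\sigma_i\tau}e^{A\tau})Bb_i$, which is the LT-IRKA generator in (\ref{RightProjectionSpace}) up to conjugation of the shift. The difference is only in how you reach that identity: the paper substitutes $\hat A=SDS^{-1}$ directly into the Sylvester equation (\ref{SylvestreEq1}), right-multiplies by $S^{-*}$ to decouple it into $r$ shifted linear systems $(\sigma_i^{*}I_n-A)\hat P_i=(I_n-e^{-\sigma_i^{*}\tau}e^{A\tau})B\hat b_i$, and solves each one; you instead write down the closed-form solution $P=\int_0^\tau e^{At}B\hat B^{*}e^{\hat A^{*}t}\,dt$ and evaluate the integral columnwise. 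Both routes need $\bar\sigma_i\notin\mathrm{spec}(A)$, but yours additionally needs uniqueness of the Sylvester solution (i.e.\ $\mathrm{spec}(A)\cap\mathrm{spec}(-\hat A^{*})=\emptyset$) to guarantee that the integral is \emph{the} $P$ computed by TL-TSIA, whereas the paper's manipulation applies to any solution of (\ref{SylvestreEq1}); you should state that hypothesis. Your explicit reconciliation of $\bar\sigma_i$ versus $\sigma_i$ via closure of the interpolation data under conjugation is actually more careful than the paper, which silently identifies $\mathrm{span}\{(\sigma_i^{*}I_n-A)^{-1}(I_n-e^{-\sigma_i^{*}\tau}e^{A\tau})B\hat b_i\}$ with $\mathcal V$. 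Two small corrections: for the left projector the diagonalizing transformation must be the \emph{right} multiplication $QR$, whose $i$th column is $\int_0^\tau e^{A^{*}t}C^{*}c_ie^{\lambda_it}\,dt=(\sigma_iI_n-A^{*})^{-1}(I_n-e^{-\sigma_i\tau}e^{A^{*}\tau})C^{*}c_i$, not $R^{-1}Q$, which would not preserve the column span; and note that on the $Q$ side no conjugate shift appears at all, so the conjugate-pair bookkeeping you flag as the hard part is needed only for $P$.
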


\begin{proof} 
We shall first prove the equivalence of the right projection subspace, $\mathcal{V}$ (\ref{RightProjectionSpace}) to the right projection space spanned by the columns of the matrix $P$ obtained by solving the Sylvester equation (\ref{SylvestreEq1}). Let $\hat{A} = SDS^{-1}$ where $D$ is a diagonal matrix and the columns of matrix $S$ are the right eigenvectors. Then, $e^{\hat{A}\tau}$ becomes $S e^{D\tau} S^{-1}$. Substituting these expressions for $\hat{A}$ and $e^{\hat{A} \tau}$ in (\ref{SylvestreEq1}) we have,
\begin{align*}
 AP+P(SDS^{-1})^{*}+B{\hat{B}}^{*}-&\\
e^{A\tau}B{\hat{B}}^{*}(Se^{D\tau}S^{-1})^{*} &=0 \\
 AP+ PS^{-*}D^{*}S^{*}+B{\hat{B}}^{*}-&\\
 e^{A\tau}B{\hat{B}}^{*}S^{-*}e^{D^{*}\tau}S^{*} &=0 \\
 APS^{-*}+PS^{-*}D^{*}+B{\hat{B}}^{*}S^{-*}-&\\
 e^{A\tau}B{\hat{B}}^{*}S^{-*}e^{D^{*}\tau}&=0 
\end{align*}

\begin{equation}\label{TLIRKA1}
A\hat{P}+\hat{P}D^{*}+B{\hat{B}}^{*}-e^{A\tau}B{\hat{B}}^{*}e^{D^{*}\tau} =0
\end{equation} 

We denote $PS^{-*}$ as $\hat{P}$ and ${B_r}^{*}S^{-*}$ as $\tilde{B}$. The columns of $\hat{P}$ are represented as $\{ \hat{P}_i, i=1,2,\hdots,r\}$. Since $S$ is a non-singular matrix so the columns of $P$ and $\hat{P}$ span the same subspace i.e.\ $\textrm{span}(P)=\textrm{span}(\hat{P})$. Let us denote the columns of $\tilde{B}$ as $\{\hat{b}_i,i=1,2,\hdots,r \}$ and assume $D= \textrm{diag}\{\lambda_1,\lambda_2,\hdots,\lambda_r\}$. The negative of the eigenvalues are considered as the interpolation points, $-\lambda_i = \sigma_i $ and  the columns $\hat{b}_i$ happen to be the right tangential directions  of the LT-IRKA algorithm. The equation (\ref{TLIRKA1}) rewritten in terms of the $i^{th}$ column of $P$ becomes
\begin{equation}
A\hat{P}_i+\hat{P}_i{{\lambda}^{*}_i}+B\hat{b}_i-e^{A\tau}B\hat{b}_ie^{{\lambda^{*}_i}\tau}=0 
\end{equation}
The equation can be rewritten as,
\begin{align*}
(\sigma^{*}_i I_n-A)\hat{P}_i = (I_n-e^{-\sigma^{*}_i \tau}e^{A\tau})B\hat{b}_i \\
 \hat{P}_i = (\sigma^{*}_i I_n-A)^{-1}(I_n-e^{-\sigma^{*}_i \tau}e^{A\tau})B\hat{b}_i
\end{align*} 
Thus, the columns of $\hat{P}$  span the subspace $\textrm{span}\{(\sigma^{*}_i I_n-A)^{-1}(I_n-e^{-\sigma^{*}_i \tau}e^{A\tau})B\hat{b}_i, i= 1,\hdots,r \}$ which is the right projection subspace $\mathcal{V}$ of the TL-IRKA Algorithm. Since the matrix $S$ is invertible, the column space of $\hat{P}$ and $P$ are same. We have shown that $\mathcal{V}=\textrm{span}(P)$. We can similarly show that $\mathcal{W}= \textrm{span}(Q)$.
\end{proof}

This theorem shows that both TL-TSIA and LT-IRKA, upon convergence yield the same left and right Projection subspaces. For a fixed reduced order $r$, both algorithms yield similar $H_{2}(\tau)$ optimal reduced order model of the original system. This will be verified with numerical examples in the next section.

\section{Numerical Examples}\label{Section 5}
We investigate the performance of LT-IRKA and compare its efficiency with TL-TSIA, TL-BT, IRKA and TL-PORK using two single-input single-output (SISO) examples. For the third example, which is a multi-input multi-output model, the performance of LT-IRKA is compared with TL-TSIA, TL-BT and IRKA. The first example is a SISO clamped beam model of order 348.  The second example is a FOM model with single input single output(SISO) of order 1006. The third example is the International Space Station (ISS) model with three inputs and three outputs and order 270. We obtain the examples from http://slicot.org/20-site/126-benchmark examples-for-model-reduction. The simulations are done in MATLAB version 8.3.0.532(R2014a) on a Intel(R) Core(TM) i5-6500 CPU @ 3.20GHz 3.19 GHz system with 16 GB RAM. 

 We shall reduce the three models over various finite time intervals of the form $[0,\tau]$ for a fixed final time instant $\tau$. Using LT-IRKA, we obtain a lower order equivalent of the models and compute the absolute and relative $H_2(\tau)$ errors. We calculate the same errors for lower-order counterparts obtained using TL-BT, TL-TSIA, IRKA and TL-PORK. TL-PORK-1 and TL-PORK-2 correspond to initialization with the reduced system obtained from TL-IRKA and IRKA, respectively. LT-IRKA results in near $H_2(\tau)$-optimal reduced-order models. To quantify how close the reduced-order system comes to satisfying the $H_2(\tau)$ optimality conditions, we define the following quantities,

\begin{enumerate}

\item Right tangential error, $\text{RTerr}_{\text{rel}}$ is defined as 
\begin{equation}\label{relRTerror}
\begin{bmatrix} \frac{\left\Vert  G_{\tau}(\sigma_{1})b_1-\hat{G}_{\tau}(\sigma_{1})b_1 \right\Vert}{\left\Vert G_{\tau}(\sigma_{1})b_1 \right\Vert} \hdots \frac{\left\Vert G_{\tau}(\sigma_{r})b_r-\hat{G}_{\tau}(\sigma_{r})b_r \right\Vert}{\left\Vert G_{\tau}(\sigma_{r})b_r \right\Vert}\end{bmatrix}
\end{equation}

\item Left tangential error, $\text{LTerr}_{\text{rel}}$ is defined as
\begin{equation}\label{relLTerror}
\begin{bmatrix}
\frac{\left\Vert {c_1}^{*}G_{\tau}(\sigma_{1})-{c_1}^{*}\hat{G}_{\tau}(\sigma_{1})  \right\Vert} {\left\Vert {c_1}^{*}G_{\tau}(\sigma_{1}) \right\Vert}  \hdots    \frac{\left\Vert {c_r}^{*}G_{\tau}(\sigma_{r})-{c_r}^{*}\hat{G}_{\tau}(\sigma_{r})  \right\Vert} {\left\Vert {c_r}^{*}G_{\tau}(\sigma_{r}) \right\Vert} \end{bmatrix}
\end{equation}
 
\item Bi-tangential error, $\text{derr}_{\text{rel}}$ is defined as 
\begin{equation} \label{relderror}
\begin{bmatrix} \frac{\left\Vert {c_1}^{*}G'_{\tau}(\sigma_{1})b_1- {c_1}^{*}\hat{G}'_{\tau}(\sigma_{1})b_1 \right\Vert} {\left\Vert {c_1}^{*}G'_{\tau}(\sigma_{1})b_1 \right\Vert}  \hdots  \frac{\left\Vert {c_i}^{*}G'_{\tau}(\sigma_{i})b_i- {c_i}^{*}\hat{G}'_{\tau}(\sigma_{i})b_i \right\Vert} {\left\Vert {c_i}^{*}G'_{\tau}(\sigma_{i})b_i \right\Vert}\end{bmatrix} 
\end{equation}

\end{enumerate}
The first and second examples are SISO systems and hence the right and left tangential errors are the same. In this case, we represent the relative errors as $\text{RTerr}_{\text{rel}}$ = $\text{LTerr}_{\text{rel}}$= $\text{Err}_{\text{rel}}$
\begin{figure}[H]
\includegraphics[width = \textwidth]{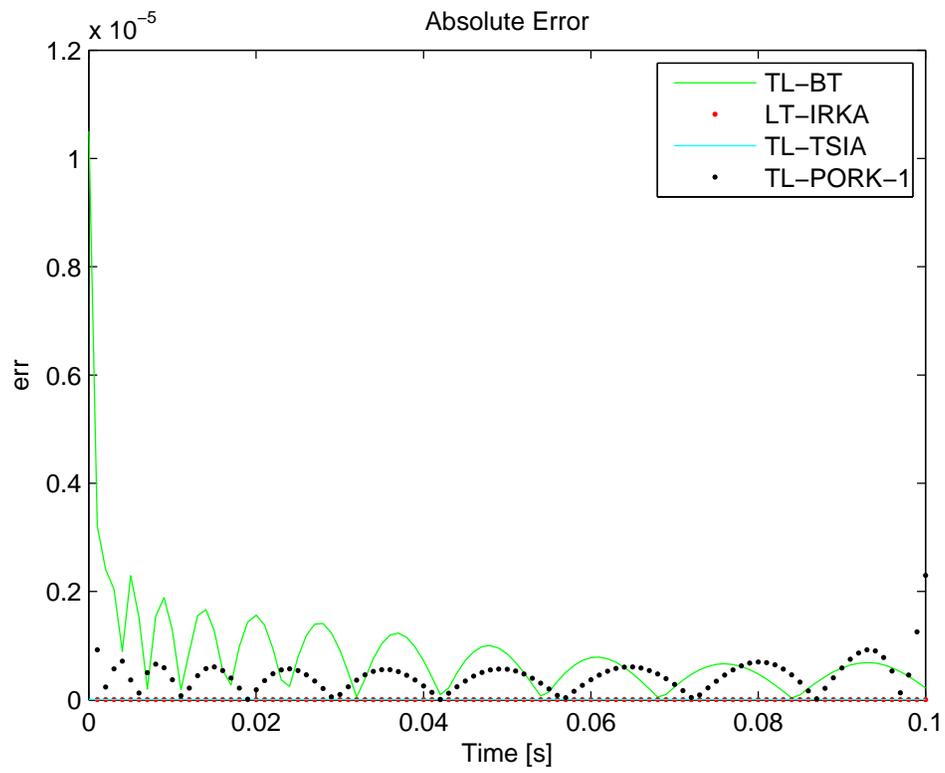}
\centering
\caption{Beam Example. Final time, $\tau$ = 0.1 s }
\label{Figure 1}
\end{figure}

\begin{figure}[H]
\includegraphics[width = \textwidth]{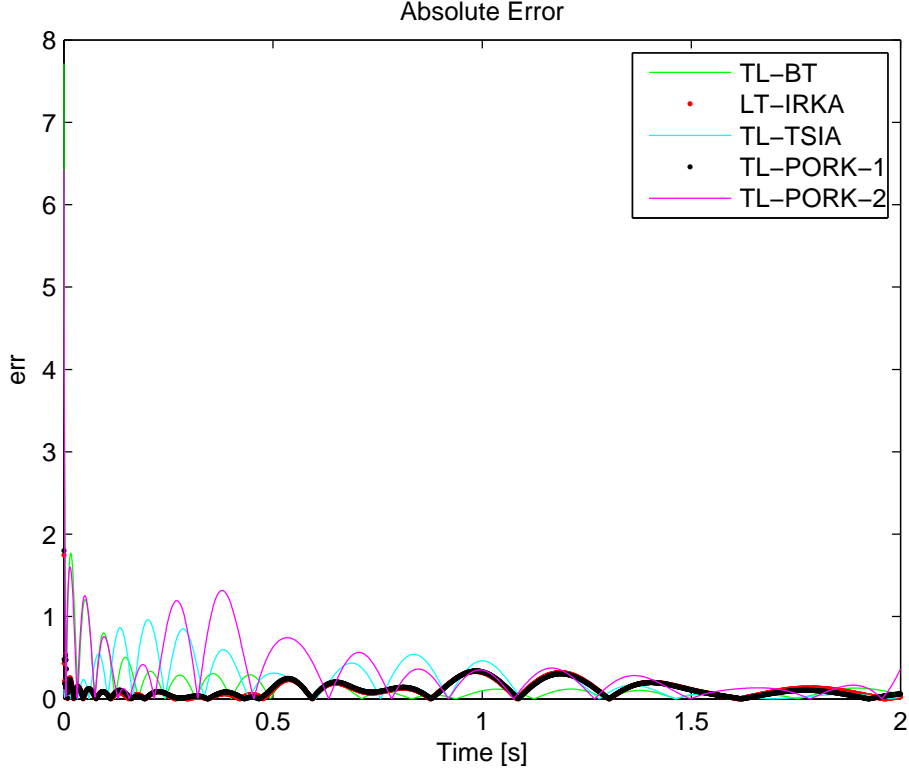}
\centering
\caption{Beam Example: $\tau$ = 2 s}
\label{Figure 2}
\end{figure}

\paragraph*{Example 1:}
The first numerical example is a beam model of order $n=348$. We obtain reduced-order equivalents with $r=12$ for two time intervals $\begin{bmatrix}0 & 0.1 \end{bmatrix}$ and $\begin{bmatrix}0 & 2 \end{bmatrix}$, using various algorithms. Fixing the error tolerance at $10^{-5}$, we initialize the interpolation points for LT-IRKA randomly. The LT-IRKA algorithm converges for both the time intervals. The number of iterations required for convergence depends on the initial interpolation points. The absolute and relative $H_2(\tau)$ errors for various reduction methods are compared in Table \ref{table1} and Table \ref{table2} for the two time intervals. Also, the absolute error response of the reduced models obtained through various algorithms is plotted in Figure \ref{Figure 1} and Figure \ref{Figure 2}. Consider the final time instant $\tau = 0.1$ s. From Table \ref{table1} we see that the relative $H_2(\tau)$ error of the reduced system obtained from LT-IRKA and TL-TSIA is several orders of magnitude less than the reduced models obtained by the other algorithms, including TL-BT, TL-PORK-1 and TL-PORK-2. In case final time instant $\tau=2$ s,  LT-IRKA and TL-PORK-1 perform better than the other algorithms concerning relative $H_2(\tau)$ error as evident from Table \ref{table2}. Referring to Figure \ref{Figure 1} and \ref{Figure 2}, we observe that LT-IRKA yields a high fidelity reduced order system for both the time intervals under consideration.

\begin{table}[H]
\tbl{Relative $H_2(\tau)$ Errors for Beam example.}
{\begin{tabular}{p{1.8cm}cccccc} \toprule
Algorithm  & TL-BT & LT-IRKA & IRKA & TL-TSIA & TL-PORK-1 & TL-PORK-2 \\ \midrule
 Rel$\left\Vert \textrm{Err} \right\Vert_{H_2(\tau)}$ $\tau = 0.1$ s & $6.79 \times $ $10^{-8}$ & $6.55 \times 10^{-11}$ & $0.0580$ & $6.85 \times 10^{-11}$ & $9.25\times 10^{-9}$ & $0.0123$ \\ \midrule

 Rel$\left\Vert \textrm{Err} \right\Vert_{H_2(\tau)}$ $\tau = 2$ s  & $0.0262$ & $0.0115 $ & $0.0476$ & $0.0243$ & $0.0114$ & $0.0375$ \\ \bottomrule
\end{tabular}}
\label{table1}
\end{table}

\begin{table}[H]
\tbl{Relative error in the optimality conditions for Beam example}
{\begin{tabular}{lccc} \toprule
 
 Final-Time & Algorithm & $\left\Vert \textrm{Err} \right\Vert_{\textrm{rel}}$ & $\left\Vert \textrm{dErr} \right\Vert_{\textrm{rel}}$\\ \midrule
\multirow{2}{*}{$\tau=0.1$ s} &  LT-IRKA & $4.48 \times 10^{-12}$ & $1.32 \times 10^{-11}$ \\ 
 &  IRKA & $0.0045$ & $5.6221$ \\ \midrule
\multirow{2}{*}{$\tau=2$ s} &  LT-IRKA & $ 0.0028 $ & $ 0.0187 $\\ 
 &  IRKA & $ 0.0804 $ & $ 8.3027 $ \\ \bottomrule
\end{tabular}}
\label{table2}
\end{table}

The real component of the interpolation points, relative right and left tangential errors and relative bi-tangential errors (\ref{relRTerror}, \ref{relLTerror} and \ref{relderror}) for the reduced model obtained by LT-IRKA for the time interval $\begin{bmatrix}0 & 0.1 \end{bmatrix}$ are given below. 
\begin{align*}
\textrm{Re}(\sigma) &= [527 ,454.7,138.1,138.1,73.3,73.3,28.2,\\
& \quad 28.2,13.2,13.2,6,-12 ],\\
\textrm{Err}_{\text{rel}} &= 10^{-13} \times [0.4,0.1,0.2,0.2,1.4,1.4,102.5,\\ 
& \quad 102.5, 172.4, 172.4, 125.1, 327.2],\\
\quad \textrm{derr}_{\text{rel}} &=  10^{-13} \times [0.4,0.2,0.2,0.2,9.2 ,9.2,736.3,\\ & \quad 736.3,404.0,404.0,212.5,628.6]
\end{align*} 

The same information for the time interval $\begin{bmatrix}0 & 2 \end{bmatrix}$ are as follows,
\begin{align*}
\textrm{Re}(\sigma) &= [45.6,45.6,11.1,11.1,3.6,3.6,0.6,0.6,\\
& \quad -0.1,0.7,0.9,0.9 ],\\
\textrm{Err}_{\text{rel}} &= [0.1\times 10^{-14},0.1\times 10^{-14},3.5 \times 10^{-13}, \\
& \quad 3.5 \times 10^{-13},2.4 \times 10^{-6}, 2.4 \times 10^{-6}, \\  & \quad  0.0017,
       0.0005, 0.0002, 0.0009, 0.0009],\\
\textrm{derr}_{\text{rel}} &=  [0.6 \times 10^{-11},0.6 \times 10^{-11},0.1 \times 10^{-9},\\ & \quad 0.1 \times 10^{-9}, 0.8\times 10^{-4},0.8 \times 10^{-4},0.008,\\ & \quad 0.008,0.0009,0.0005,0.0105,0.0105]
\end{align*} 
 The interpolation points are the reflection of the eigenvalues of the reduced system. We observe from the above information that interpolation points corresponding to sufficiently negative eigenvalues result in minimal interpolation errors and almost converge to the optimality conditions. Interpolation points related to eigenvalues lying close to the imaginary axis or in the right hand side (RHS) of the s-plane yields high interpolation errors and causes the reduced model to deviate from the optimality conditions. 
Finally, we observe from Table 2 that LT-IRKA yields better near optimal $H_2(\tau)$ models than IRKA. Also, relative interpolation errors are negligible for the short time interval ($\tau = 0.1$ s). In comparison, the errors are several orders of magnitude higher for the second time interval ($\tau = 2$ s).  These observations validate the dependence of the interpolation error on the individual interpolation points and the final time instant ($\tau$) as discussed in Theorem \ref{Theorem 2}. 
\begin{figure}[H]
\includegraphics[width = \textwidth]{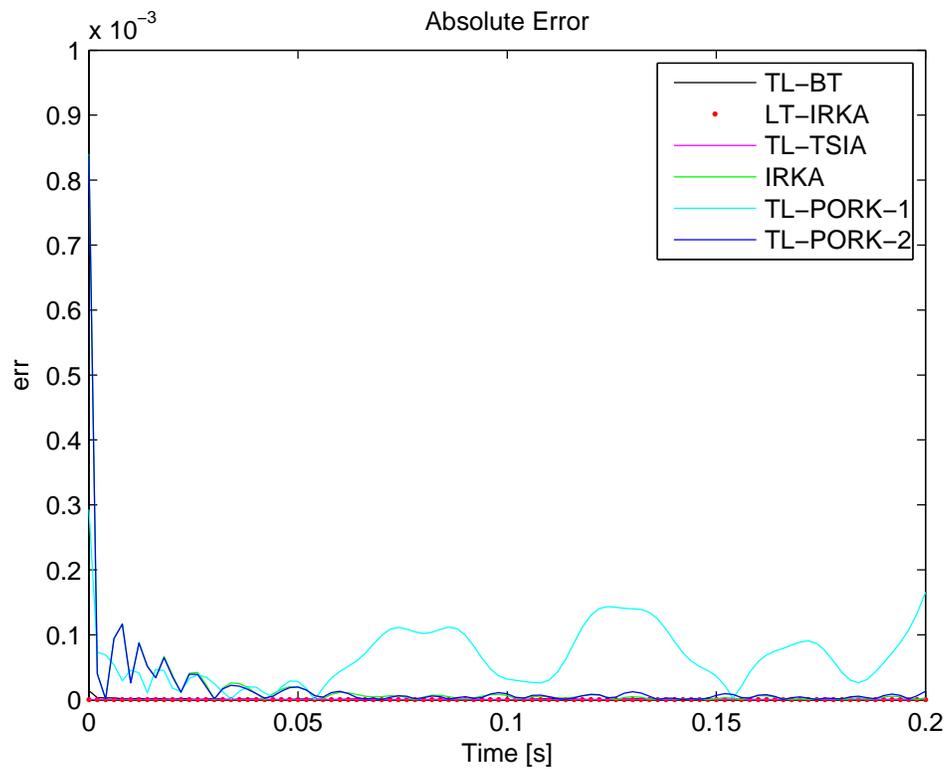}
\centering
\caption{FOM Example: $\tau$ = 0.2 s}
\label{Figure 3}
\end{figure}

\begin{figure}[H]
\includegraphics[width = \textwidth]{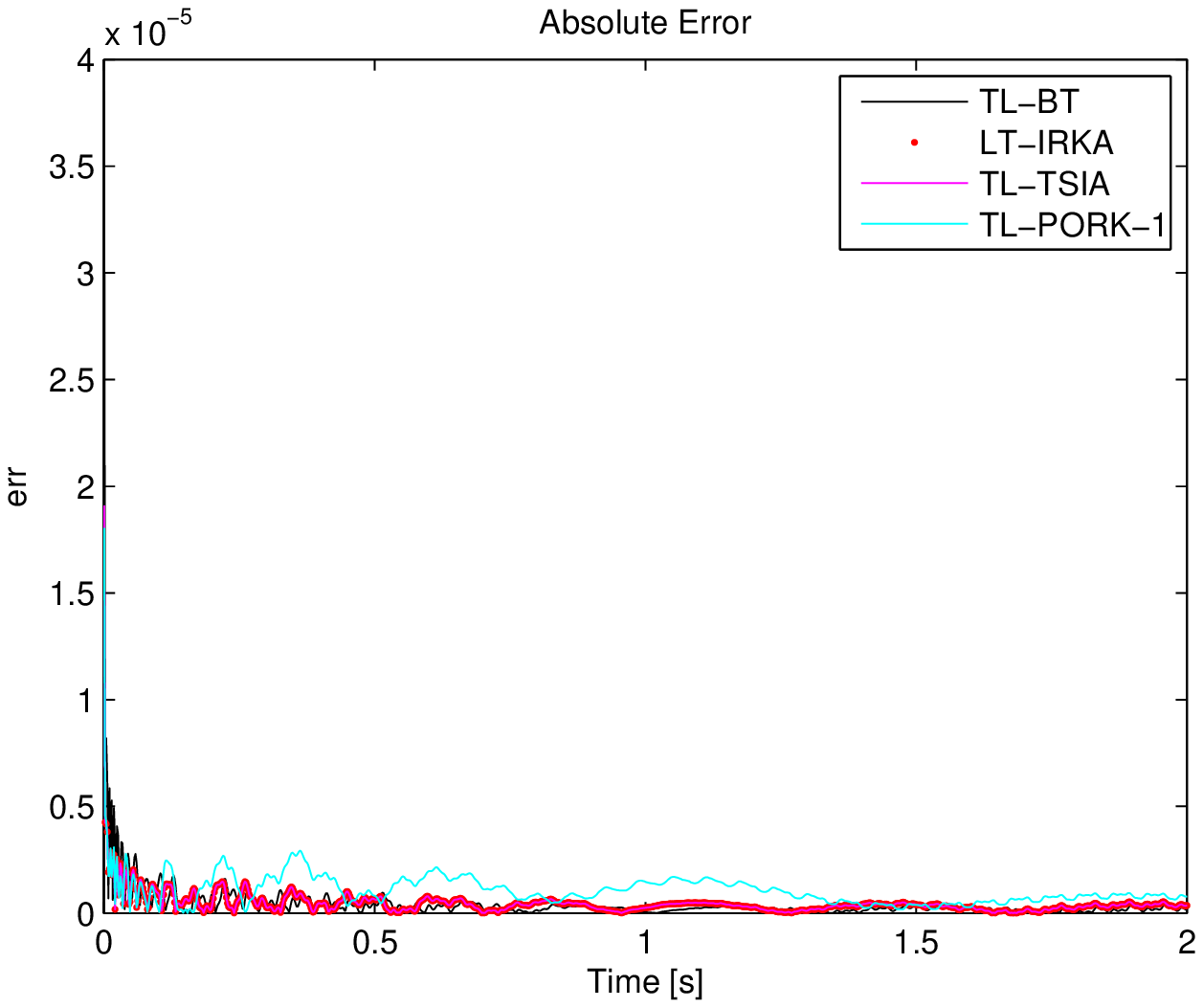}
\centering
\caption{FOM Example: $\tau$ = 2 s}
\label{Figure 4}
\end{figure}

\begin{table}[H]
\tbl{Relative $H_2(\tau)$ Errors in FOM example.}
{\begin{tabular}{p{1.8cm}cccccc} \toprule
Algorithm  & TL-BT & LT-IRKA & TL-TSIA & IRKA & TL-PORK-1 & TL-PORK-2\\ \midrule
 Rel$\left\Vert \textrm{Err} \right\Vert_{H_2(\tau)}$ $\tau = 0.2$ s & $5.79 \times 10^{-9} $ & $5.59 \times 10^{-12}$  & $5.59 \times 10^{-12} $ & $3.01 \times 10^{-7} $ & $3.14 \times 10^{-7}$ & $3.01 \times 10^{-7}$ \\ \midrule

 Rel$\left\Vert \textrm{Err} \right\Vert_{H_2(\tau)}$ $\tau = 2$ s  & $1.06 \times 10^{-8} $ & $6.31 \times 10^{-9}$  & $6.31 \times 10^{-9} $ & $2.05 \times 10^{-7} $ & $1.05 \times 10^{-8}$ & $2.05 \times 10^{-7}$ \\ \bottomrule
\end{tabular}}
\label{table3}
\end{table}

\begin{table}[H]
\tbl{Relative error in the optimality conditions for FOM example}
{\begin{tabular}{lccc} \toprule
 
 Final-Time & Algorithm & $\left\Vert \textrm{Err} \right\Vert_{\textrm{rel}}$ & $\left\Vert \textrm{dErr} \right\Vert_{\textrm{rel}}$\\ \midrule
\multirow{2}{*}{$\tau=0.2$ s} &  LT-IRKA &  $2.03 \times 10^{-12}$ & $1.23 \times 10^{-10}$  \\ 
 &  IRKA & $2.60 \times 10^{-8}$ & $1.02 \times 10^{-5}$  \\ \midrule
\multirow{2}{*}{$\tau=2$ s} &  LT-IRKA & $4.24 \times 10^{-10} $ & $1.35 \times 10^{-8}$ \\ 
 &  IRKA &  $1.07 \times 10^{-9}$ & $3.20 \times 10^{-5}$ \\ \bottomrule
\end{tabular}}
\label{table4}
\end{table}

\paragraph*{Example 2:}
The second example is a FOM model of order 1006. Reduced models of order $r=20$ are obtained using LT-IRKA,TL-BT, TL-TSIA, IRKA, TL-PORK-1 and TL-PORK-2. The error tolerance is fixed at $10^{-5}$. We randomly initialize the interpolation points for LT-IRKA. For the time interval $\begin{bmatrix}0 & 0.2 \end{bmatrix}$, we compare the relative $H_2(\tau)$-errors in Table \ref{table3}. LT-IRKA performs better than TL-BT, IRKA, TL-PORK-1 and TL-PORK-2 and yields the same relative $H_2(\tau)$ error as TL-TSIA. Next, we consider a larger time interval $\begin{bmatrix}0 & 2 \end{bmatrix}$. Table \ref{table3} compares the relative $H_2(\tau)$ errors for this time-interval. Again, LT-IRKA has a lesser $H_2(\tau)$ error than TL-BT, IRKA, TL-PORK-1 and TL-PORK-2 and similar $H_2(\tau)$ error as TL-TSIA. Figures \ref{Figure 3} and \ref{Figure 4} show that LT-IRKA approximates the original system extremely well for both time intervals. Table \ref{table4} compares the relative interpolation errors of LT-IRKA and IRKA for both time-intervals. LT-IRKA performs better than IRKA for both time intervals. 
\begin{figure}[H]
\includegraphics[width = \textwidth]{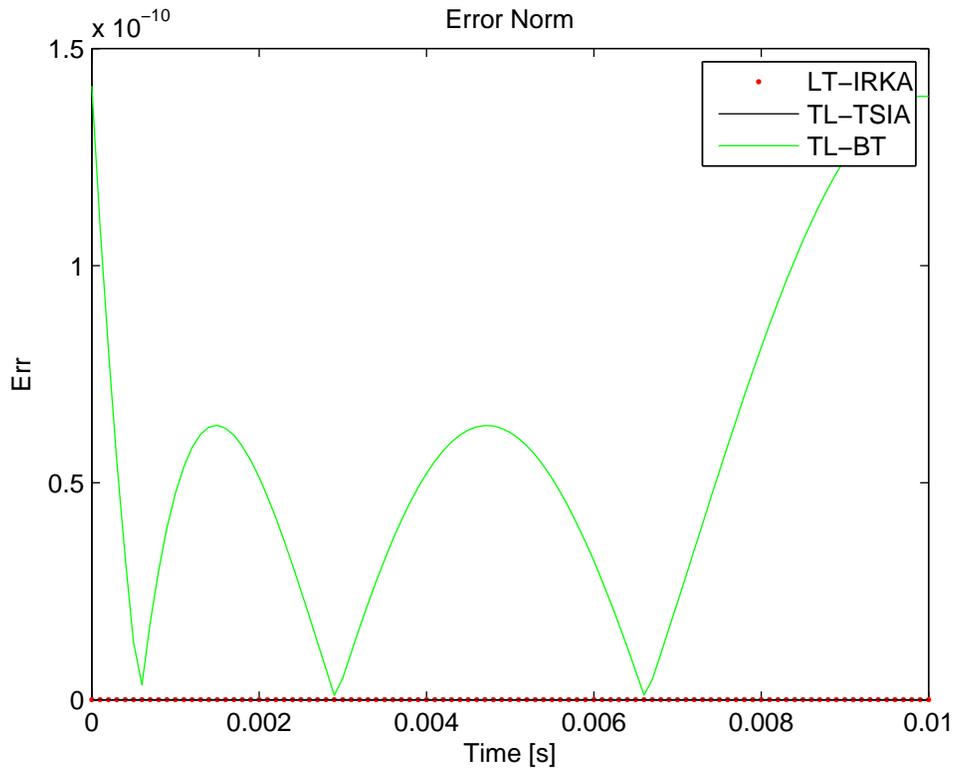}
\centering
\caption{ISS Example: $\tau$ = 0.01 s}
\label{Figure 5}
\end{figure}

\begin{figure}[H]
\includegraphics[width = \textwidth]{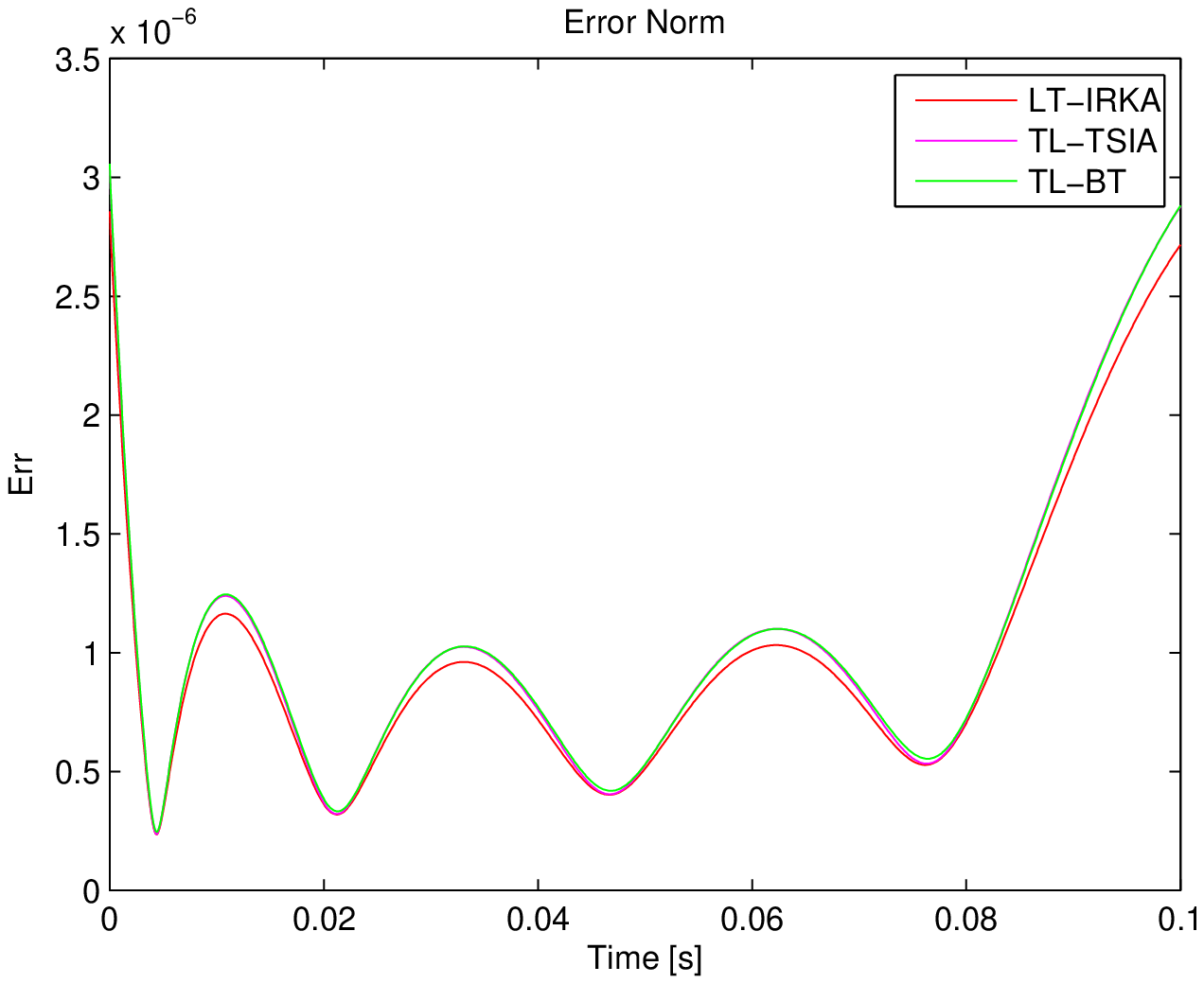}
\centering
\caption{ISS Example: $\tau$ = 0.1 s}
\label{Figure 6}
\end{figure}

\begin{figure}[H]
\includegraphics[width = \textwidth]{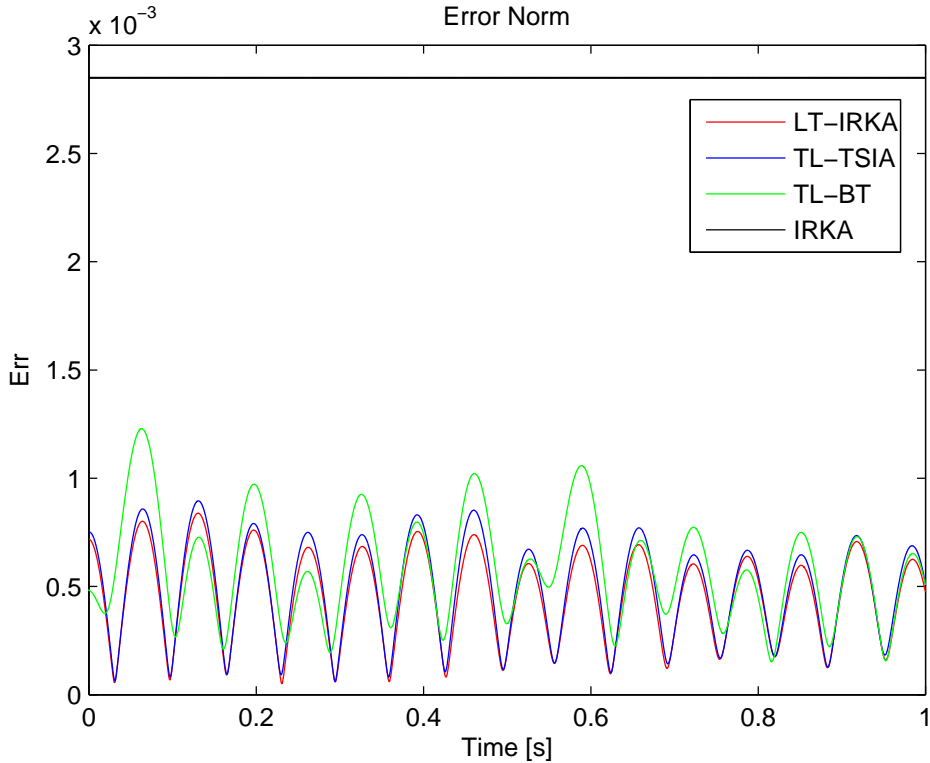}
\centering
\caption{ISS Example: $\tau$ = 1 s}
\label{Figure 7}
\end{figure}

\paragraph*{Example 3:}
The final example discussed is a model of the International Space Station(ISS). Reduced order equivalents for $r=12$ are obtained by applying LT-BT, LT-IRKA, LT-TSIA and IRKA over three time intervals $\begin{bmatrix} 0 & 0.01\end{bmatrix}$, $\begin{bmatrix} 0 & 0.1\end{bmatrix}$  and $\begin{bmatrix} 0 & 1\end{bmatrix}$. The error tolerance is fixed at $10^{-8}$. The initial interpolation points and tangential directions are randomly chosen, and LT-IRKA converges for the three time intervals. The $H_2(\tau)$ errors of all the reduced order models are compared in Table \ref{table5}. For the smallest time, $\tau=0.01$, LT-IRKA and TL-TSIA perform better than TL-BT and IRKA. For the other two final time instants, $\tau=0.1$ and $\tau=1$, LT-IRKA yields similar $H_2(\tau)$ error as TL-BT and TL-TSIA and lesser $H_2(\tau)$ error compared to IRKA. Since the system is MIMO, we plot the error norm trajectory in Figure \ref{Figure 5}, \ref{Figure 6}  and \ref{Figure 7} for the three final time instants. The error in the interpolation based $H_2(\tau)$ optimality conditions for the three time intervals are given in Table \ref{table6}. For every time interval, LT-IRKA performs better than IRKA. However, for the smallest time interval, LT-IRKA performs better than IRKA by several orders of magnitude.

\begin{table}[H]
\tbl{Relative $H_2(\tau)$ Errors in FOM example.}
{\begin{tabular}{p{1.8cm}cccc} \toprule
Algorithm  & TL-BT & LT-IRKA & TL-TSIA & IRKA\\ \midrule
 Rel$\left\Vert \textrm{Err} \right\Vert_{H_2(\tau)}$ $\tau = 0.01$ s & $9.84 \times 10^{-9} $ & $2.0319 \times 10^{-12}$  & $2.0332 \times 10^{-12} $ & $0.6940 $ \\ \midrule

 Rel$\left\Vert \textrm{Err} \right\Vert_{H_2(\tau)}$ $\tau = 0.1$ s  & $2.99 \times 10^{-4}$ & $2.9962 \times 10^{-4} $  & $2.9923 \times 10^{-4}$ & $0.8657 $  \\ \midrule
 
 Rel$\left\Vert \textrm{Err} \right\Vert_{H_2(\tau)}$ $\tau = 1$ s  & $0.1946$ & $0.1685 $  & $0.1684  $ & $0.8774$ \\  \bottomrule
\end{tabular}}
\label{table5}
\end{table}

\begin{table}[H]
\tbl{Relative error in the optimality conditions for FOM example}
{\begin{tabular}{lcccc} \toprule
 
 Final-Time & Algorithm & $\left\Vert \textrm{RTErr} \right\Vert_{\textrm{rel}}$ &  $\left\Vert \textrm{LTErr} \right\Vert_{\textrm{rel}}$ & $\left\Vert \textrm{dErr} \right\Vert_{\textrm{rel}}$ \\ \midrule
\multirow{2}{*}{$\tau=0.01$ s} &  LT-IRKA & $1.84 \times 10^{-12}$ & $1.83 \times 10^{-12}$ & $1.94 \times 10^{-9}$ \\ 
 &  IRKA & $2.2988$ & $2.2991 $ & $2.2879$ \\ \midrule
\multirow{2}{*}{$\tau=0.1$ s} &  LT-IRKA & $7.68 \times 10^{-4}$ & $8.69 \times 10^{-4}$ & $0.0021 $  \\ 
 &  IRKA & $1.5786$ &   $1.5451$ & $9.4650$  \\ \midrule
\multirow{2}{*}{$\tau=1$ s} &  LT-IRKA & $0.0575$ & $0.0608$ & $0.1398$ \\ 
 &  IRKA & $0.3206$ & $0.4175$ & $0.4214$  \\ 
 \bottomrule
\end{tabular}}
\label{table6}
\end{table}

For the three examples considered, LT-IRKA yields high fidelity reduced order models for various time intervals considered and has certain advantages. It outperforms TL-BT, an important time-limited model reduction algorithm for the smaller time intervals. For the case of TL-TSIA, both the algorithms yield comparable results as predicted in Theorem \ref{Theorem 3}. However, the reduced system obtained by initializing IRKA with random initial conditions is used as an initial guess for TL-TSIA \citep{goyal2019time} whereas LT-IRKA is initialized randomly. Also, LT-IRKA performs better than TL-PORK-1 and TL-PORK-2, as evident from Table \ref{table1} and Table \ref{table3}. Finally, from Table \ref{table2}, Table \ref{table4} and Table \ref{table6} we can see that the reduced models obtained by LT-IRKA satisfy the $H_2(\tau)$-optimality conditions with a higher degree of accuracy compared to the reduced order models obtained from IRKA for all time intervals under consideration.

\section{Conclusion}\label{Section 6}
In this paper, we derive interpolation-based Meier-Luenberger $H_2(\tau)$ optimality conditions. We propose a near $H_2(\tau)$ optimal interpolation-based algorithm called LT-IRKA for model order reduction of LTI systems over a time interval of the form $\begin{bmatrix}0 & \tau \end{bmatrix}$. The algorithm yields a near $H_2(\tau)$-optimal reduced order system. The errors in the right tangential, left tangential and bi- tangential interpolation conditions for $H_2(\tau)$-optimality are quantified. We compare LT-IRKA with another near $H_2(\tau)$ optimal reduction algorithm called TL-TSIA. We apply the LT-IRKA algorithm to reduce three LTI models for various time intervals and compare its performance with various time-limited model reduction algorithms and IRKA. The numerical simulations demonstrate the good performance of LT-IRKA compared to the considered model reduction algorithms over the time interval of interest.

\bibliographystyle{apacite}
\bibliography{references}

\begin{thebibliography}{}

\bibitem [\protect \citeauthoryear {%
Antoulas%
, Beattie%
\BCBL {}\ \BBA {} Gugercin%
}{%
Antoulas%
\ \protect \BOthers {.}}{%
{\protect \APACyear {2010}}%
}]{%
antoulas2010interpolatory}
\APACinsertmetastar {%
antoulas2010interpolatory}%
\begin{APACrefauthors}%
Antoulas, A\BPBI C.%
, Beattie, C\BPBI A.%
\BCBL {}\ \BBA {} Gugercin, S.%
\end{APACrefauthors}%
\unskip\
\newblock
\APACrefYearMonthDay{2010}{}{}.
\newblock
{\BBOQ}\APACrefatitle {Interpolatory model reduction of large-scale dynamical
  systems} {Interpolatory model reduction of large-scale dynamical
  systems}.{\BBCQ}
\newblock
\BIn{} \APACrefbtitle {Efficient modeling and control of large-scale systems}
  {Efficient modeling and control of large-scale systems}\ (\BPGS\ 3--58).
\newblock
\APACaddressPublisher{}{Springer}.
\PrintBackRefs{\CurrentBib}

\bibitem [\protect \citeauthoryear {%
Arnoldi%
}{%
Arnoldi%
}{%
{\protect \APACyear {1951}}%
}]{%
arnoldi1951principle}
\APACinsertmetastar {%
arnoldi1951principle}%
\begin{APACrefauthors}%
Arnoldi, W\BPBI E.%
\end{APACrefauthors}%
\unskip\
\newblock
\APACrefYearMonthDay{1951}{}{}.
\newblock
{\BBOQ}\APACrefatitle {The principle of minimized iterations in the solution of
  the matrix eigenvalue problem} {The principle of minimized iterations in the
  solution of the matrix eigenvalue problem}.{\BBCQ}
\newblock
\APACjournalVolNumPages{Quarterly of {A}pplied {M}athematics}{9}{1}{17--29}.
\PrintBackRefs{\CurrentBib}

\bibitem [\protect \citeauthoryear {%
Astolfi%
}{%
Astolfi%
}{%
{\protect \APACyear {2010}}%
}]{%
astolfi2010model}
\APACinsertmetastar {%
astolfi2010model}%
\begin{APACrefauthors}%
Astolfi, A.%
\end{APACrefauthors}%
\unskip\
\newblock
\APACrefYearMonthDay{2010}{}{}.
\newblock
{\BBOQ}\APACrefatitle {Model reduction by moment matching for linear and
  nonlinear systems} {Model reduction by moment matching for linear and
  nonlinear systems}.{\BBCQ}
\newblock
\APACjournalVolNumPages{IEEE Transactions on Automatic
  Control}{55}{10}{2321--2336}.
\PrintBackRefs{\CurrentBib}

\bibitem [\protect \citeauthoryear {%
Beattie%
\ \BBA {} Gugercin%
}{%
Beattie%
\ \BBA {} Gugercin%
}{%
{\protect \APACyear {2007}}%
}]{%
beattie2007krylov}
\APACinsertmetastar {%
beattie2007krylov}%
\begin{APACrefauthors}%
Beattie, C\BPBI A.%
\BCBT {}\ \BBA {} Gugercin, S.%
\end{APACrefauthors}%
\unskip\
\newblock
\APACrefYearMonthDay{2007}{}{}.
\newblock
{\BBOQ}\APACrefatitle {Krylov-based minimization for optimal ${H}_2$ model
  reduction} {Krylov-based minimization for optimal ${H}_2$ model
  reduction}.{\BBCQ}
\newblock
\BIn{} \APACrefbtitle {{2007 46th IEEE Conference on Decision and Control}}
  {{2007 46th IEEE Conference on Decision and Control}}\ (\BPGS\ 4385--4390).
\PrintBackRefs{\CurrentBib}

\bibitem [\protect \citeauthoryear {%
Beattie%
\ \BBA {} Gugercin%
}{%
Beattie%
\ \BBA {} Gugercin%
}{%
{\protect \APACyear {2009}}%
}]{%
beattie2009trust}
\APACinsertmetastar {%
beattie2009trust}%
\begin{APACrefauthors}%
Beattie, C\BPBI A.%
\BCBT {}\ \BBA {} Gugercin, S.%
\end{APACrefauthors}%
\unskip\
\newblock
\APACrefYearMonthDay{2009}{}{}.
\newblock
{\BBOQ}\APACrefatitle {A trust region method for optimal ${H}_2$ model
  reduction} {A trust region method for optimal ${H}_2$ model
  reduction}.{\BBCQ}
\newblock
\BIn{} \APACrefbtitle {{Proceedings of the 48th IEEE Conference on Decision and
  Control (CDC) held jointly with 2009 28th Chinese Control Conference}}
  {{Proceedings of the 48th IEEE Conference on Decision and Control (CDC) held
  jointly with 2009 28th Chinese Control Conference}}\ (\BPGS\ 5370--5375).
\PrintBackRefs{\CurrentBib}

\bibitem [\protect \citeauthoryear {%
Benner%
, K{\o}hler%
\BCBL {}\ \BBA {} Saak%
}{%
Benner%
\ \protect \BOthers {.}}{%
{\protect \APACyear {2011}}%
}]{%
benner2011sparse}
\APACinsertmetastar {%
benner2011sparse}%
\begin{APACrefauthors}%
Benner, P.%
, K{\o}hler, M.%
\BCBL {}\ \BBA {} Saak, J.%
\end{APACrefauthors}%
\unskip\
\newblock
\APACrefYearMonthDay{2011}{}{}.
\newblock
{\BBOQ}\APACrefatitle {Sparse-dense {S}ylvester equations in ${H}_2$-model
  order reduction} {Sparse-dense {S}ylvester equations in ${H}_2$-model order
  reduction}.{\BBCQ}
\newblock

\PrintBackRefs{\CurrentBib}

\bibitem [\protect \citeauthoryear {%
Breiten%
, Beattie%
\BCBL {}\ \BBA {} Gugercin%
}{%
Breiten%
\ \protect \BOthers {.}}{%
{\protect \APACyear {2015}}%
}]{%
breiten2015near}
\APACinsertmetastar {%
breiten2015near}%
\begin{APACrefauthors}%
Breiten, T.%
, Beattie, C.%
\BCBL {}\ \BBA {} Gugercin, S.%
\end{APACrefauthors}%
\unskip\
\newblock
\APACrefYearMonthDay{2015}{}{}.
\newblock
{\BBOQ}\APACrefatitle {Near-optimal frequency-weighted interpolatory model
  reduction} {Near-optimal frequency-weighted interpolatory model
  reduction}.{\BBCQ}
\newblock
\APACjournalVolNumPages{Systems \& Control Letters}{78}{}{8--18}.
\PrintBackRefs{\CurrentBib}

\bibitem [\protect \citeauthoryear {%
Duff%
\ \BBA {} K{\"u}rschner%
}{%
Duff%
\ \BBA {} K{\"u}rschner%
}{%
{\protect \APACyear {2021}}%
}]{%
duff2021numerical}
\APACinsertmetastar {%
duff2021numerical}%
\begin{APACrefauthors}%
Duff, I\BPBI P.%
\BCBT {}\ \BBA {} K{\"u}rschner, P.%
\end{APACrefauthors}%
\unskip\
\newblock
\APACrefYearMonthDay{2021}{}{}.
\newblock
{\BBOQ}\APACrefatitle {Numerical computation and new output bounds for
  time-limited balanced truncation of discrete-time systems} {Numerical
  computation and new output bounds for time-limited balanced truncation of
  discrete-time systems}.{\BBCQ}
\newblock
\APACjournalVolNumPages{Linear Algebra and its Applications}{623}{}{367--397}.
\PrintBackRefs{\CurrentBib}

\bibitem [\protect \citeauthoryear {%
Feldmann%
\ \BBA {} Freund%
}{%
Feldmann%
\ \BBA {} Freund%
}{%
{\protect \APACyear {1997}}%
}]{%
feldmann1997interconnect}
\APACinsertmetastar {%
feldmann1997interconnect}%
\begin{APACrefauthors}%
Feldmann, P.%
\BCBT {}\ \BBA {} Freund, R\BPBI W.%
\end{APACrefauthors}%
\unskip\
\newblock
\APACrefYearMonthDay{1997}{}{}.
\newblock
{\BBOQ}\APACrefatitle {Interconnect-delay computation and signal-integrity
  verification using the {S}y{MPVL} algorithm} {Interconnect-delay computation
  and signal-integrity verification using the {S}y{MPVL} algorithm}.{\BBCQ}
\newblock
\BIn{} \APACrefbtitle {{P}roc. 1997 {E}uropean {C}onference on {C}ircuit
  {T}heory and {D}esign} {{P}roc. 1997 {E}uropean {C}onference on {C}ircuit
  {T}heory and {D}esign}\ (\BPGS\ 132--138).
\PrintBackRefs{\CurrentBib}

\bibitem [\protect \citeauthoryear {%
Feng%
, Korvink%
\BCBL {}\ \BBA {} Benner%
}{%
Feng%
\ \protect \BOthers {.}}{%
{\protect \APACyear {2015}}%
}]{%
feng2015fully}
\APACinsertmetastar {%
feng2015fully}%
\begin{APACrefauthors}%
Feng, L.%
, Korvink, J\BPBI G.%
\BCBL {}\ \BBA {} Benner, P.%
\end{APACrefauthors}%
\unskip\
\newblock
\APACrefYearMonthDay{2015}{}{}.
\newblock
{\BBOQ}\APACrefatitle {A fully adaptive scheme for model order reduction based
  on moment matching} {A fully adaptive scheme for model order reduction based
  on moment matching}.{\BBCQ}
\newblock
\APACjournalVolNumPages{IEEE Transactions on Components, Packaging and
  Manufacturing Technology}{5}{12}{1872--1884}.
\PrintBackRefs{\CurrentBib}

\bibitem [\protect \citeauthoryear {%
Gallivan%
, Grimme%
\BCBL {}\ \BBA {} Van~Dooren%
}{%
Gallivan%
\ \protect \BOthers {.}}{%
{\protect \APACyear {1996}}%
}]{%
gallivan1996rational}
\APACinsertmetastar {%
gallivan1996rational}%
\begin{APACrefauthors}%
Gallivan, K.%
, Grimme, G.%
\BCBL {}\ \BBA {} Van~Dooren, P.%
\end{APACrefauthors}%
\unskip\
\newblock
\APACrefYearMonthDay{1996}{}{}.
\newblock
{\BBOQ}\APACrefatitle {A rational {L}anczos algorithm for model reduction} {A
  rational {L}anczos algorithm for model reduction}.{\BBCQ}
\newblock
\APACjournalVolNumPages{Numerical Algorithms}{12}{1}{33--63}.
\PrintBackRefs{\CurrentBib}

\bibitem [\protect \citeauthoryear {%
Gawronski%
\ \BBA {} Juang%
}{%
Gawronski%
\ \BBA {} Juang%
}{%
{\protect \APACyear {1990}}%
}]{%
gawronski1990model}
\APACinsertmetastar {%
gawronski1990model}%
\begin{APACrefauthors}%
Gawronski, W.%
\BCBT {}\ \BBA {} Juang, J\BHBI N.%
\end{APACrefauthors}%
\unskip\
\newblock
\APACrefYearMonthDay{1990}{}{}.
\newblock
{\BBOQ}\APACrefatitle {Model reduction in limited time and frequency intervals}
  {Model reduction in limited time and frequency intervals}.{\BBCQ}
\newblock
\APACjournalVolNumPages{International Journal of Systems
  Science}{21}{2}{349--376}.
\PrintBackRefs{\CurrentBib}

\bibitem [\protect \citeauthoryear {%
Glover%
}{%
Glover%
}{%
{\protect \APACyear {1984}}%
}]{%
glover1984all}
\APACinsertmetastar {%
glover1984all}%
\begin{APACrefauthors}%
Glover, K.%
\end{APACrefauthors}%
\unskip\
\newblock
\APACrefYearMonthDay{1984}{}{}.
\newblock
{\BBOQ}\APACrefatitle {All optimal {H}ankel-norm approximations of linear
  multivariable systems and their {L},infinity-error bounds} {All optimal
  {H}ankel-norm approximations of linear multivariable systems and their
  {L},infinity-error bounds}.{\BBCQ}
\newblock
\APACjournalVolNumPages{International Journal of Control}{39}{6}{1115--1193}.
\PrintBackRefs{\CurrentBib}

\bibitem [\protect \citeauthoryear {%
Goyal%
\ \BBA {} Redmann%
}{%
Goyal%
\ \BBA {} Redmann%
}{%
{\protect \APACyear {2019}}%
}]{%
goyal2019time}
\APACinsertmetastar {%
goyal2019time}%
\begin{APACrefauthors}%
Goyal, P.%
\BCBT {}\ \BBA {} Redmann, M.%
\end{APACrefauthors}%
\unskip\
\newblock
\APACrefYearMonthDay{2019}{}{}.
\newblock
{\BBOQ}\APACrefatitle {Time-limited ${H}_2$-optimal model order reduction}
  {Time-limited ${H}_2$-optimal model order reduction}.{\BBCQ}
\newblock
\APACjournalVolNumPages{Applied Mathematics and Computation}{355}{}{184--197}.
\PrintBackRefs{\CurrentBib}

\bibitem [\protect \citeauthoryear {%
Grimme%
}{%
Grimme%
}{%
{\protect \APACyear {1997}}%
}]{%
grimme1997krylov}
\APACinsertmetastar {%
grimme1997krylov}%
\begin{APACrefauthors}%
Grimme, E.%
\end{APACrefauthors}%
\unskip\
\newblock
\APACrefYear{1997}.
\unskip\
\newblock
\APACrefbtitle {Krylov projection methods for model reduction} {Krylov
  projection methods for model reduction}\ \APACtypeAddressSchool {\BUPhD}{}{}.
\unskip\
\newblock
\APACaddressSchool {}{University of Illinois at Urbana Champaign}.
\PrintBackRefs{\CurrentBib}

\bibitem [\protect \citeauthoryear {%
Gugercin%
\ \BBA {} Antoulas%
}{%
Gugercin%
\ \BBA {} Antoulas%
}{%
{\protect \APACyear {2003}}%
}]{%
gugercin2003time}
\APACinsertmetastar {%
gugercin2003time}%
\begin{APACrefauthors}%
Gugercin, S.%
\BCBT {}\ \BBA {} Antoulas, A\BPBI C.%
\end{APACrefauthors}%
\unskip\
\newblock
\APACrefYearMonthDay{2003}{}{}.
\newblock
{\BBOQ}\APACrefatitle {A time-limited balanced reduction method} {A
  time-limited balanced reduction method}.{\BBCQ}
\newblock
\BIn{} \APACrefbtitle {{42nd IEEE International Conference on Decision and
  Control (IEEE Cat. No. 03CH37475)}} {{42nd IEEE International Conference on
  Decision and Control (IEEE Cat. No. 03CH37475)}}\ (\BVOL~5, \BPGS\
  5250--5253).
\PrintBackRefs{\CurrentBib}

\bibitem [\protect \citeauthoryear {%
Gugercin%
\ \BBA {} Antoulas%
}{%
Gugercin%
\ \BBA {} Antoulas%
}{%
{\protect \APACyear {2004}}%
}]{%
gugercin2004survey}
\APACinsertmetastar {%
gugercin2004survey}%
\begin{APACrefauthors}%
Gugercin, S.%
\BCBT {}\ \BBA {} Antoulas, A\BPBI C.%
\end{APACrefauthors}%
\unskip\
\newblock
\APACrefYearMonthDay{2004}{}{}.
\newblock
{\BBOQ}\APACrefatitle {A survey of model reduction by balanced truncation and
  some new results} {A survey of model reduction by balanced truncation and
  some new results}.{\BBCQ}
\newblock
\APACjournalVolNumPages{International Journal of Control}{77}{8}{748--766}.
\PrintBackRefs{\CurrentBib}

\bibitem [\protect \citeauthoryear {%
Gugercin%
, Antoulas%
\BCBL {}\ \BBA {} Beattie%
}{%
Gugercin%
\ \protect \BOthers {.}}{%
{\protect \APACyear {2008}}%
}]{%
gugercin2008h_2}
\APACinsertmetastar {%
gugercin2008h_2}%
\begin{APACrefauthors}%
Gugercin, S.%
, Antoulas, A\BPBI C.%
\BCBL {}\ \BBA {} Beattie, C.%
\end{APACrefauthors}%
\unskip\
\newblock
\APACrefYearMonthDay{2008}{}{}.
\newblock
{\BBOQ}\APACrefatitle {${H}_2$ model reduction for large-scale linear dynamical
  systems} {${H}_2$ model reduction for large-scale linear dynamical
  systems}.{\BBCQ}
\newblock
\APACjournalVolNumPages{SIAM Journal on Matrix Analysis and
  Applications}{30}{2}{609--638}.
\PrintBackRefs{\CurrentBib}

\bibitem [\protect \citeauthoryear {%
Gugercin%
, Sorensen%
\BCBL {}\ \BBA {} Antoulas%
}{%
Gugercin%
\ \protect \BOthers {.}}{%
{\protect \APACyear {2003}}%
}]{%
gugercin2003modified}
\APACinsertmetastar {%
gugercin2003modified}%
\begin{APACrefauthors}%
Gugercin, S.%
, Sorensen, D\BPBI C.%
\BCBL {}\ \BBA {} Antoulas, A\BPBI C.%
\end{APACrefauthors}%
\unskip\
\newblock
\APACrefYearMonthDay{2003}{}{}.
\newblock
{\BBOQ}\APACrefatitle {A modified low-rank {S}mith method for large-scale
  {L}yapunov equations} {A modified low-rank {S}mith method for large-scale
  {L}yapunov equations}.{\BBCQ}
\newblock
\APACjournalVolNumPages{Numerical Algorithms}{32}{1}{27--55}.
\PrintBackRefs{\CurrentBib}

\bibitem [\protect \citeauthoryear {%
Holmes%
, Lumley%
, Berkooz%
\BCBL {}\ \BBA {} Rowley%
}{%
Holmes%
\ \protect \BOthers {.}}{%
{\protect \APACyear {2012}}%
}]{%
holmes2012turbulence}
\APACinsertmetastar {%
holmes2012turbulence}%
\begin{APACrefauthors}%
Holmes, P.%
, Lumley, J\BPBI L.%
, Berkooz, G.%
\BCBL {}\ \BBA {} Rowley, C\BPBI W.%
\end{APACrefauthors}%
\unskip\
\newblock
\APACrefYear{2012}.
\newblock
\APACrefbtitle {Turbulence, coherent structures, dynamical systems and
  symmetry} {Turbulence, coherent structures, dynamical systems and symmetry}.
\newblock
\APACaddressPublisher{}{Cambridge university press}.
\PrintBackRefs{\CurrentBib}

\bibitem [\protect \citeauthoryear {%
Jazlan%
, Sreeram%
\BCBL {}\ \BBA {} Togneri%
}{%
Jazlan%
\ \protect \BOthers {.}}{%
{\protect \APACyear {2015}}%
}]{%
jazlan2015cross}
\APACinsertmetastar {%
jazlan2015cross}%
\begin{APACrefauthors}%
Jazlan, A.%
, Sreeram, V.%
\BCBL {}\ \BBA {} Togneri, R.%
\end{APACrefauthors}%
\unskip\
\newblock
\APACrefYearMonthDay{2015}{}{}.
\newblock
{\BBOQ}\APACrefatitle {Cross gramian based time interval model reduction}
  {Cross gramian based time interval model reduction}.{\BBCQ}
\newblock
\BIn{} \APACrefbtitle {{2015 5th Australian Control Conference (AUCC)}} {{2015
  5th Australian Control Conference (AUCC)}}\ (\BPGS\ 274--276).
\PrintBackRefs{\CurrentBib}

\bibitem [\protect \citeauthoryear {%
Kumar%
, Jazlan%
\BCBL {}\ \BBA {} Sreeram%
}{%
Kumar%
\ \protect \BOthers {.}}{%
{\protect \APACyear {2017}}%
}]{%
kumar2017generalized}
\APACinsertmetastar {%
kumar2017generalized}%
\begin{APACrefauthors}%
Kumar, D.%
, Jazlan, A.%
\BCBL {}\ \BBA {} Sreeram, V.%
\end{APACrefauthors}%
\unskip\
\newblock
\APACrefYearMonthDay{2017}{}{}.
\newblock
{\BBOQ}\APACrefatitle {Generalized time limited {G}ramian based model
  reduction} {Generalized time limited {G}ramian based model reduction}.{\BBCQ}
\newblock
\BIn{} \APACrefbtitle {{2017 Australian and New Zealand Control Conference
  (ANZCC)}} {{2017 Australian and New Zealand Control Conference (ANZCC)}}\
  (\BPGS\ 47--49).
\PrintBackRefs{\CurrentBib}

\bibitem [\protect \citeauthoryear {%
K{\"u}rschner%
}{%
K{\"u}rschner%
}{%
{\protect \APACyear {2018}}%
}]{%
kurschner2018balanced}
\APACinsertmetastar {%
kurschner2018balanced}%
\begin{APACrefauthors}%
K{\"u}rschner, P.%
\end{APACrefauthors}%
\unskip\
\newblock
\APACrefYearMonthDay{2018}{}{}.
\newblock
{\BBOQ}\APACrefatitle {Balanced truncation model order reduction in limited
  time intervals for large systems} {Balanced truncation model order reduction
  in limited time intervals for large systems}.{\BBCQ}
\newblock
\APACjournalVolNumPages{Advances in Computational
  Mathematics}{44}{6}{1821--1844}.
\PrintBackRefs{\CurrentBib}

\bibitem [\protect \citeauthoryear {%
Lanczos%
}{%
Lanczos%
}{%
{\protect \APACyear {1950}}%
}]{%
lanczos1950iteration}
\APACinsertmetastar {%
lanczos1950iteration}%
\begin{APACrefauthors}%
Lanczos, C.%
\end{APACrefauthors}%
\unskip\
\newblock
\APACrefYearMonthDay{1950}{}{}.
\newblock
{\BBOQ}\APACrefatitle {An iteration method for the solution of the eigenvalue
  problem of linear differential and integral operators} {An iteration method
  for the solution of the eigenvalue problem of linear differential and
  integral operators}.{\BBCQ}
\newblock

\PrintBackRefs{\CurrentBib}

\bibitem [\protect \citeauthoryear {%
Li%
\ \BBA {} White%
}{%
Li%
\ \BBA {} White%
}{%
{\protect \APACyear {2002}}%
}]{%
li2002low}
\APACinsertmetastar {%
li2002low}%
\begin{APACrefauthors}%
Li, J\BHBI R.%
\BCBT {}\ \BBA {} White, J.%
\end{APACrefauthors}%
\unskip\
\newblock
\APACrefYearMonthDay{2002}{}{}.
\newblock
{\BBOQ}\APACrefatitle {Low rank solution of {L}yapunov equations} {Low rank
  solution of {L}yapunov equations}.{\BBCQ}
\newblock
\APACjournalVolNumPages{SIAM Journal on Matrix Analysis and
  Applications}{24}{1}{260--280}.
\PrintBackRefs{\CurrentBib}

\bibitem [\protect \citeauthoryear {%
Liu%
\ \BBA {} Anderson%
}{%
Liu%
\ \BBA {} Anderson%
}{%
{\protect \APACyear {1989}}%
}]{%
liu1989singular}
\APACinsertmetastar {%
liu1989singular}%
\begin{APACrefauthors}%
Liu, Y.%
\BCBT {}\ \BBA {} Anderson, B\BPBI D.%
\end{APACrefauthors}%
\unskip\
\newblock
\APACrefYearMonthDay{1989}{}{}.
\newblock
{\BBOQ}\APACrefatitle {Singular perturbation approximation of balanced systems}
  {Singular perturbation approximation of balanced systems}.{\BBCQ}
\newblock
\APACjournalVolNumPages{International Journal of Control}{50}{4}{1379--1405}.
\PrintBackRefs{\CurrentBib}

\bibitem [\protect \citeauthoryear {%
Meier%
\ \BBA {} Luenberger%
}{%
Meier%
\ \BBA {} Luenberger%
}{%
{\protect \APACyear {1967}}%
}]{%
meier1967approximation}
\APACinsertmetastar {%
meier1967approximation}%
\begin{APACrefauthors}%
Meier, L.%
\BCBT {}\ \BBA {} Luenberger, D.%
\end{APACrefauthors}%
\unskip\
\newblock
\APACrefYearMonthDay{1967}{}{}.
\newblock
{\BBOQ}\APACrefatitle {Approximation of linear constant systems} {Approximation
  of linear constant systems}.{\BBCQ}
\newblock
\APACjournalVolNumPages{IEEE Transactions on Automatic
  Control}{12}{5}{585--588}.
\PrintBackRefs{\CurrentBib}

\bibitem [\protect \citeauthoryear {%
Moore%
}{%
Moore%
}{%
{\protect \APACyear {1981}}%
}]{%
moore1981principal}
\APACinsertmetastar {%
moore1981principal}%
\begin{APACrefauthors}%
Moore, B.%
\end{APACrefauthors}%
\unskip\
\newblock
\APACrefYearMonthDay{1981}{}{}.
\newblock
{\BBOQ}\APACrefatitle {Principal component analysis in linear systems:
  {C}ontrollability, observability, and model reduction} {Principal component
  analysis in linear systems: {C}ontrollability, observability, and model
  reduction}.{\BBCQ}
\newblock
\APACjournalVolNumPages{IEEE Transactions on Automatic Control}{26}{1}{17--32}.
\PrintBackRefs{\CurrentBib}

\bibitem [\protect \citeauthoryear {%
Mullis%
\ \BBA {} Roberts%
}{%
Mullis%
\ \BBA {} Roberts%
}{%
{\protect \APACyear {1976}}%
}]{%
mullis1976synthesis}
\APACinsertmetastar {%
mullis1976synthesis}%
\begin{APACrefauthors}%
Mullis, C.%
\BCBT {}\ \BBA {} Roberts, R.%
\end{APACrefauthors}%
\unskip\
\newblock
\APACrefYearMonthDay{1976}{}{}.
\newblock
{\BBOQ}\APACrefatitle {Synthesis of minimum roundoff noise fixed point digital
  filters} {Synthesis of minimum roundoff noise fixed point digital
  filters}.{\BBCQ}
\newblock
\APACjournalVolNumPages{IEEE Transactions on Circuits and
  Systems}{23}{9}{551--562}.
\PrintBackRefs{\CurrentBib}

\bibitem [\protect \citeauthoryear {%
Panzer%
}{%
Panzer%
}{%
{\protect \APACyear {2014}}%
}]{%
panzer2014model}
\APACinsertmetastar {%
panzer2014model}%
\begin{APACrefauthors}%
Panzer, H\BPBI K.%
\end{APACrefauthors}%
\unskip\
\newblock
\APACrefYear{2014}.
\unskip\
\newblock
\APACrefbtitle {Model order reduction by {K}rylov subspace methods with global
  error bounds and automatic choice of parameters} {Model order reduction by
  {K}rylov subspace methods with global error bounds and automatic choice of
  parameters}\ \APACtypeAddressSchool {\BUPhD}{}{}.
\unskip\
\newblock
\APACaddressSchool {}{Technische Universit{\"a}t M{\"u}nchen}.
\PrintBackRefs{\CurrentBib}

\bibitem [\protect \citeauthoryear {%
Penzl%
}{%
Penzl%
}{%
{\protect \APACyear {2006}}%
}]{%
penzl2006algorithms}
\APACinsertmetastar {%
penzl2006algorithms}%
\begin{APACrefauthors}%
Penzl, T.%
\end{APACrefauthors}%
\unskip\
\newblock
\APACrefYearMonthDay{2006}{}{}.
\newblock
{\BBOQ}\APACrefatitle {Algorithms for model reduction of large dynamical
  systems} {Algorithms for model reduction of large dynamical systems}.{\BBCQ}
\newblock
\APACjournalVolNumPages{Linear {A}lgebra and its
  {A}pplications}{415}{2-3}{322--343}.
\PrintBackRefs{\CurrentBib}

\bibitem [\protect \citeauthoryear {%
Redmann%
}{%
Redmann%
}{%
{\protect \APACyear {2020}}%
}]{%
redmann2020lt2}
\APACinsertmetastar {%
redmann2020lt2}%
\begin{APACrefauthors}%
Redmann, M.%
\end{APACrefauthors}%
\unskip\
\newblock
\APACrefYearMonthDay{2020}{}{}.
\newblock
{\BBOQ}\APACrefatitle {{An $L_T^2$-error bound for time-limited balanced
  truncation}} {{An $L_T^2$-error bound for time-limited balanced
  truncation}}.{\BBCQ}
\newblock
\APACjournalVolNumPages{Systems \& Control Letters}{136}{}{104620}.
\PrintBackRefs{\CurrentBib}

\bibitem [\protect \citeauthoryear {%
Redmann%
\ \BBA {} K{\"u}rschner%
}{%
Redmann%
\ \BBA {} K{\"u}rschner%
}{%
{\protect \APACyear {2018}}%
}]{%
redmann2018output}
\APACinsertmetastar {%
redmann2018output}%
\begin{APACrefauthors}%
Redmann, M.%
\BCBT {}\ \BBA {} K{\"u}rschner, P.%
\end{APACrefauthors}%
\unskip\
\newblock
\APACrefYearMonthDay{2018}{}{}.
\newblock
{\BBOQ}\APACrefatitle {An output error bound for time-limited balanced
  truncation} {An output error bound for time-limited balanced
  truncation}.{\BBCQ}
\newblock
\APACjournalVolNumPages{Systems \& Control Letters}{121}{}{1--6}.
\PrintBackRefs{\CurrentBib}

\bibitem [\protect \citeauthoryear {%
Rommes%
\ \BBA {} Martins%
}{%
Rommes%
\ \BBA {} Martins%
}{%
{\protect \APACyear {2006}}%
}]{%
rommes2006efficient}
\APACinsertmetastar {%
rommes2006efficient}%
\begin{APACrefauthors}%
Rommes, J.%
\BCBT {}\ \BBA {} Martins, N.%
\end{APACrefauthors}%
\unskip\
\newblock
\APACrefYearMonthDay{2006}{}{}.
\newblock
{\BBOQ}\APACrefatitle {Efficient computation of transfer function dominant
  poles using subspace acceleration} {Efficient computation of transfer
  function dominant poles using subspace acceleration}.{\BBCQ}
\newblock
\APACjournalVolNumPages{IEEE Transactions on Power Systems}{21}{3}{1218--1226}.
\PrintBackRefs{\CurrentBib}

\bibitem [\protect \citeauthoryear {%
Sato%
\ \BBA {} Sato%
}{%
Sato%
\ \BBA {} Sato%
}{%
{\protect \APACyear {2015}}%
}]{%
sato2015riemannian}
\APACinsertmetastar {%
sato2015riemannian}%
\begin{APACrefauthors}%
Sato, H.%
\BCBT {}\ \BBA {} Sato, K.%
\end{APACrefauthors}%
\unskip\
\newblock
\APACrefYearMonthDay{2015}{}{}.
\newblock
{\BBOQ}\APACrefatitle {Riemannian trust-region methods for ${H}_2$ optimal
  model reduction} {Riemannian trust-region methods for ${H}_2$ optimal model
  reduction}.{\BBCQ}
\newblock
\BIn{} \APACrefbtitle {{2015 54th IEEE Conference on Decision and Control
  (CDC)}} {{2015 54th IEEE Conference on Decision and Control (CDC)}}\ (\BPGS\
  4648--4655).
\PrintBackRefs{\CurrentBib}

\bibitem [\protect \citeauthoryear {%
Shaker%
\ \BBA {} Shaker%
}{%
Shaker%
\ \BBA {} Shaker%
}{%
{\protect \APACyear {2013}}%
}]{%
shaker2013generalized}
\APACinsertmetastar {%
shaker2013generalized}%
\begin{APACrefauthors}%
Shaker, H\BPBI R.%
\BCBT {}\ \BBA {} Shaker, F.%
\end{APACrefauthors}%
\unskip\
\newblock
\APACrefYearMonthDay{2013}{}{}.
\newblock
{\BBOQ}\APACrefatitle {Generalized time-limited balanced reduction method}
  {Generalized time-limited balanced reduction method}.{\BBCQ}
\newblock
\BIn{} \APACrefbtitle {{2013 American Control Conference}} {{2013 American
  Control Conference}}\ (\BPGS\ 5530--5535).
\PrintBackRefs{\CurrentBib}

\bibitem [\protect \citeauthoryear {%
Sinani%
\ \BBA {} Gugercin%
}{%
Sinani%
\ \BBA {} Gugercin%
}{%
{\protect \APACyear {2019}}%
}]{%
sinani2019h2}
\APACinsertmetastar {%
sinani2019h2}%
\begin{APACrefauthors}%
Sinani, K.%
\BCBT {}\ \BBA {} Gugercin, S.%
\end{APACrefauthors}%
\unskip\
\newblock
\APACrefYearMonthDay{2019}{}{}.
\newblock
{\BBOQ}\APACrefatitle {H2(tf) optimality conditions for a finite-time horizon}
  {H2(tf) optimality conditions for a finite-time horizon}.{\BBCQ}
\newblock
\APACjournalVolNumPages{Automatica}{110}{}{108604}.
\PrintBackRefs{\CurrentBib}

\bibitem [\protect \citeauthoryear {%
Sorensen%
\ \BBA {} Antoulas%
}{%
Sorensen%
\ \BBA {} Antoulas%
}{%
{\protect \APACyear {2002}}%
}]{%
sorensen2002sylvester}
\APACinsertmetastar {%
sorensen2002sylvester}%
\begin{APACrefauthors}%
Sorensen, D\BPBI C.%
\BCBT {}\ \BBA {} Antoulas, A.%
\end{APACrefauthors}%
\unskip\
\newblock
\APACrefYearMonthDay{2002}{}{}.
\newblock
{\BBOQ}\APACrefatitle {The {S}ylvester equation and approximate balanced
  reduction} {The {S}ylvester equation and approximate balanced
  reduction}.{\BBCQ}
\newblock
\APACjournalVolNumPages{Linear {A}lgebra and its
  {A}pplications}{351}{}{671--700}.
\PrintBackRefs{\CurrentBib}

\bibitem [\protect \citeauthoryear {%
Tahavori%
\ \BBA {} Shaker%
}{%
Tahavori%
\ \BBA {} Shaker%
}{%
{\protect \APACyear {2013}}%
}]{%
tahavori2013model}
\APACinsertmetastar {%
tahavori2013model}%
\begin{APACrefauthors}%
Tahavori, M.%
\BCBT {}\ \BBA {} Shaker, H\BPBI R.%
\end{APACrefauthors}%
\unskip\
\newblock
\APACrefYearMonthDay{2013}{}{}.
\newblock
{\BBOQ}\APACrefatitle {Model reduction via time-interval balanced stochastic
  truncation for linear time invariant systems} {Model reduction via
  time-interval balanced stochastic truncation for linear time invariant
  systems}.{\BBCQ}
\newblock
\APACjournalVolNumPages{International Journal of Systems
  Science}{44}{3}{493--501}.
\PrintBackRefs{\CurrentBib}

\bibitem [\protect \citeauthoryear {%
Villemagne%
\ \BBA {} Skelton%
}{%
Villemagne%
\ \BBA {} Skelton%
}{%
{\protect \APACyear {1987}}%
}]{%
villemagne1987model}
\APACinsertmetastar {%
villemagne1987model}%
\begin{APACrefauthors}%
Villemagne, C\BPBI d.%
\BCBT {}\ \BBA {} Skelton, R\BPBI E.%
\end{APACrefauthors}%
\unskip\
\newblock
\APACrefYearMonthDay{1987}{}{}.
\newblock
{\BBOQ}\APACrefatitle {Model reductions using a projection formulation} {Model
  reductions using a projection formulation}.{\BBCQ}
\newblock
\APACjournalVolNumPages{International Journal of Control}{46}{6}{2141--2169}.
\PrintBackRefs{\CurrentBib}

\bibitem [\protect \citeauthoryear {%
Wilson%
}{%
Wilson%
}{%
{\protect \APACyear {1970}}%
}]{%
wilson1970optimum}
\APACinsertmetastar {%
wilson1970optimum}%
\begin{APACrefauthors}%
Wilson, D.%
\end{APACrefauthors}%
\unskip\
\newblock
\APACrefYearMonthDay{1970}{}{}.
\newblock
{\BBOQ}\APACrefatitle {Optimum solution of model-reduction problem} {Optimum
  solution of model-reduction problem}.{\BBCQ}
\newblock
\BIn{} \APACrefbtitle {{Proceedings of the Institution of Electrical
  Engineers}} {{Proceedings of the Institution of Electrical Engineers}}\
  (\BVOL~117, \BPGS\ 1161--1165).
\PrintBackRefs{\CurrentBib}

\bibitem [\protect \citeauthoryear {%
Wolf%
}{%
Wolf%
}{%
{\protect \APACyear {2014}}%
}]{%
wolf2014h}
\APACinsertmetastar {%
wolf2014h}%
\begin{APACrefauthors}%
Wolf, T.%
\end{APACrefauthors}%
\unskip\
\newblock
\APACrefYear{2014}.
\unskip\
\newblock
\APACrefbtitle {${H}_2$ pseudo-optimal model order reduction} {${H}_2$
  pseudo-optimal model order reduction}\ \APACtypeAddressSchool {\BUPhD}{}{}.
\unskip\
\newblock
\APACaddressSchool {}{Technische Universit{\"a}t M{\"u}nchen}.
\PrintBackRefs{\CurrentBib}

\bibitem [\protect \citeauthoryear {%
Xu%
\ \BBA {} Zeng%
}{%
Xu%
\ \BBA {} Zeng%
}{%
{\protect \APACyear {2011}}%
}]{%
xu2011optimal}
\APACinsertmetastar {%
xu2011optimal}%
\begin{APACrefauthors}%
Xu, Y.%
\BCBT {}\ \BBA {} Zeng, T.%
\end{APACrefauthors}%
\unskip\
\newblock
\APACrefYearMonthDay{2011}{}{}.
\newblock
{\BBOQ}\APACrefatitle {{OPTIMAL} ${H}_2$ {MODEL} {REDUCTION} {FOR} {LARGE}
  {SCALE} {MIMO} {SYSTEMS} {VIA} {TANGENTIAL} {INTERPOLATION}.} {{OPTIMAL}
  ${H}_2$ {MODEL} {REDUCTION} {FOR} {LARGE} {SCALE} {MIMO} {SYSTEMS} {VIA}
  {TANGENTIAL} {INTERPOLATION}.}{\BBCQ}
\newblock
\APACjournalVolNumPages{International Journal of Numerical Analysis \&
  Modeling}{8}{1}{}.
\PrintBackRefs{\CurrentBib}

\bibitem [\protect \citeauthoryear {%
Yan%
\ \BBA {} Lam%
}{%
Yan%
\ \BBA {} Lam%
}{%
{\protect \APACyear {1999}}%
}]{%
yan1999approximate}
\APACinsertmetastar {%
yan1999approximate}%
\begin{APACrefauthors}%
Yan, W\BHBI Y.%
\BCBT {}\ \BBA {} Lam, J.%
\end{APACrefauthors}%
\unskip\
\newblock
\APACrefYearMonthDay{1999}{}{}.
\newblock
{\BBOQ}\APACrefatitle {An approximate approach to {H}/sup 2/optimal model
  reduction} {An approximate approach to {H}/sup 2/optimal model
  reduction}.{\BBCQ}
\newblock
\APACjournalVolNumPages{IEEE Transactions on Automatic
  Control}{44}{7}{1341--1358}.
\PrintBackRefs{\CurrentBib}

\bibitem [\protect \citeauthoryear {%
Zulfiqar%
, Sreeram%
\BCBL {}\ \BBA {} Du%
}{%
Zulfiqar%
\ \protect \BOthers {.}}{%
{\protect \APACyear {2020}}%
}]{%
zulfiqar2020time}
\APACinsertmetastar {%
zulfiqar2020time}%
\begin{APACrefauthors}%
Zulfiqar, U.%
, Sreeram, V.%
\BCBL {}\ \BBA {} Du, X.%
\end{APACrefauthors}%
\unskip\
\newblock
\APACrefYearMonthDay{2020}{}{}.
\newblock
{\BBOQ}\APACrefatitle {Time-limited pseudo-optimal ${H}_2$-model order
  reduction} {Time-limited pseudo-optimal ${H}_2$-model order
  reduction}.{\BBCQ}
\newblock
\APACjournalVolNumPages{IET Control Theory \&
  Applications}{14}{14}{1995--2007}.
\PrintBackRefs{\CurrentBib}

\end{thebibliography}

\end{document}